\numberwithin{equation}{section}
  \theoremstyle{remark}
  \newtheorem*{acknowledgement*}{\protect\acknowledgementname}
\newcommand{\lyxaddress}[1]{
\par {\raggedright #1
\vspace{1.4em}
\noindent\par}
}
 \theoremstyle{definition}
 \newtheorem*{defn*}{\protect\definitionname}
\theoremstyle{plain}
\newtheorem{thm}{\protect\theoremname}
  \theoremstyle{plain}
  \newtheorem{cor}[thm]{\protect\corollaryname}
  \theoremstyle{plain}
  \newtheorem*{thm*}{\protect\theoremname}
  \providecommand{\acknowledgementname}{Acknowledgement}
  \providecommand{\corollaryname}{Corollary}
  \providecommand{\definitionname}{Definition}
  \providecommand{\theoremname}{Theorem}
\providecommand{\theoremname}{Theorem}
\begin{document}

\title{A Model of Collaboration Network Formation with Heterogeneous Skills
}

\author{Katharine A. Anderson\\
Carnegie Mellon University, Tepper School of Business\\
andersok@andrew.cmu.edu}
\maketitle
\begin{abstract}
Collaboration networks provide a method for examining the highly heterogeneous
structure of collaborative communities. However, we still have limited
theoretical understanding of how individual heterogeneity relates
to network heterogeneity. The model presented here provides a framework
linking an individual\textquoteright s skill set to her position in
the collaboration network, and the distribution of skills in the population
to the structure of the collaboration network as a whole. This model
suggests that there is a non-trivial relationship between skills and
network position: individuals with a useful combination of skills
will have a disproportionate number of links in the network. Indeed,
in some cases, an individual\textquoteright s degree is non-monotonic
in the number of skills she has\textemdash an individual with very
few skills may outperform an individual with many. Special cases of
the model suggest that the degree distribution of the network will
be skewed, even when the distribution of skills is uniform in the
population. The degree distribution becomes more skewed as problems
become more difficult, leading to a community dominated by a few high-degree
superstars. This has striking implications for labor market outcomes
in industries where production is largely the result of collaborative
effort.\end{abstract}
\begin{acknowledgement*}
Thanks to Scott Page, Ross O'Connell, Lada Adamic, and Robert Willis
for their valuable input. This work was completed with the help of
funding from the NSF and computing resources from the University of
Michigan Center for the Study of Complex Systems. 
\end{acknowledgement*}
Keywords: theoretical models, network formation, degree distribution,
collaboration, science of science, individual heterogeneity, knowledge-based
production

\lyxaddress{Corresponding author: Tepper School, 5000 Forbes Ave, Pittsburgh,
PA 15206 \\
(412) 268-6143}

\pagebreak{}

Collaboration is an increasingly important part of economic activity.
As our economy shifts away from manufacturing, towards more knowledge-based
industries, the dominant methods of production have shifted away from
assembly lines towards collaborative team-based production. As a result,
knowledge-based firms look more and more like universities and national
labs, where collaborative problem-solving is a vital part of work.
It is generally accepted that collaboration is vital because it allows
individuals with different skills to pool their efforts and solve
more difficult problems than any of them could alone (\citealp{Hong2001},
\citealp{Polzer2002}, \citealp{Thomas-Hunt2003}, \citealp{Phillips2004}).
As collaboration between heterogeneous workers becomes more central
to economic production, it is increasingly important to consider not
only the overall composition of the workforce, but also the \emph{structure}
of their interactions.

\begin{figure}[h]
\includegraphics[width=0.9\textwidth]{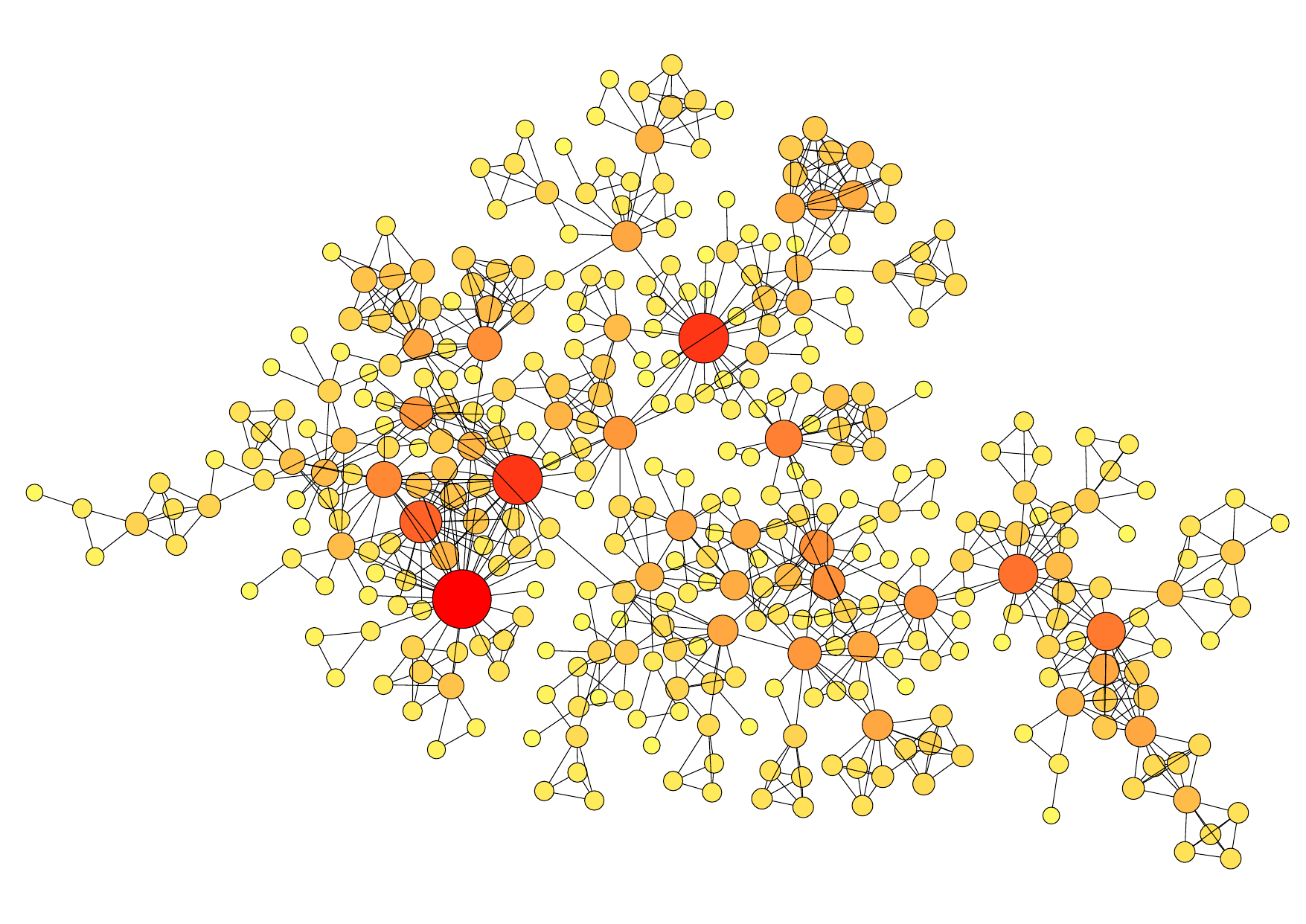}

\caption{\label{fig:econ-coauthorship}A network of coauthorship among network
scientists.}
\end{figure}

Network theory provides a useful perspective when grappling with this
complex set of interactions between collaborators, and a growing literature
has examined the structure of collaboration networks, such as the
network science coauthorship network shown in Figure\ref{fig:econ-coauthorship}
(data courtesy of \citealp{Newman2006}). One of the most striking
features of collaboration networks are their extreme heterogeneity.
Some individuals in this network have many links, while most have
very few. This skewed degree distribution can be found in most collaboration
networks (\citealp{Powell1996}, \citealp{Barabasi1999}, \citealp{Newman2001},
\citealp{Gleiser2003}, \citealp{Moody2004}, \citealp{Uzzi2005},
\citealp{Acedo2006b}, \citealp{Goyal2006}, \citealp{Iyer2006}),
and suggests a similar disparity in both community engagement and
productivity.

There is ample evidence that this network heterogeneity is important
in the function of a collaborative community. On an individual level,
a person\textquoteright s position in the network reflects her role
in the collaborative community, and has an effect on her behavior
and outcomes. Network position is correlated with influence (\citealp{Menzel1955},
\citealp{DeMarzo2003}, \citealp{Golub2010}, \citealp{Banerjee2011}),
access to knowledge, tools, and information (see \citealp{Jackson2008}
for a summary of this literature), and the probability of finding
a job (\citealp{Calvo-Armengol2004} and \citealp{Calvo-Armengol2004a}),
not to mention that individuals with more collaborators tend to be
more productive. On a more global level, the overall structure of
the collaboration network reflects the nature of the problem solving
community\textendash for example, a workplace where a single individual
drives most of the collaborative interaction will function differently
from one where no individual is dominant. In particular, the structure
of the collaboration network affects the flow of information and ideas,
the speed of problem solving efforts, and the spread of effective
technologies (see \citealp{Jackson2007a}, \citealp{Jackson2007}
for examples).

This suggests two interesting classes of questions relating to collaboration
networks: 
\begin{enumerate}
\item What characteristics of an individual affect her position on the collaboration
network? 
\item What aspects of the collaborative process (eg: institutions, characteristics
of problems and problem solving communities) affect the overall structure
of the collaboration network?
\end{enumerate}
A large empirical literature on heterophily in collaboration networks
suggests that skill complementarity is an important driving factor
behind decisions over collaborators (\citet{Moody2004}\citet{Rivera:2010aa}).
Therefore, it seems obvious that an individual\textquoteright s skill
set will govern their role in the collaboration network \textemdash that
individuals occupy different places in the network, not solely due
to luck, but rather due to their different combinations of skills
and abilities\textemdash their human capital. However, despite the
advances made in understanding network formation (described in detail
below) we still have little theoretical understanding of how this
heterogeneity in human capital maps onto observed network heterogeneity.
In this paper, I fill that gap with a model that links an individual\textquoteright s
skill set to her position on the collaboration network. This allows
me to also link the global structure of the collaboration network
to the overall distribution of skills in the population.

I start with a population of individuals, each of whom is endowed
with a set of discrete skills, drawn from a larger pool of skills
that are relevant for solving problems.\footnote{This idea of giving individuals multiple skills is rooted in an older
labor literature, starting with \citealp{Roy1951}, which gives individuals
an ability level in each of two different occupations (for example,
\citealp{Roy1951} gives an individual a skill level as a ``hunter''
and a ``fisherman''). There has been recent work that explicitly
starts to migrate towards a model where individuals can utilize both
skills for tasks (rather than simply using one or the other). Examples
include \citealp{Lazear2004,Lazear2005} and \citealp{Astebro2011}.
However, all of these works confine their analysis two skills. Here,
I consider skill sets of arbitrary size. In a later section, I show
that this distinction is significant. } That set of skills represents the human capital of the problem solver\textendash the
tools, techniques and knowledge that are useful inputs to knowledge-based
production. When problems are difficult, the individuals collaborate
with others who have complementary skills. The result is a collaboration
network. The structure of this network and an individual\textquoteright s
position in it will necessarily differ depending on the distribution
of skills in the population. For example, if individuals in a community
have only one skill apiece, then individuals with rare skills will
have more collaborators. But when individuals potentially have multiple
skills, the picture is not so straightforward. Through several examples,
I will show that an individual with a useful combination of skills
may have more collaborators than an individual with rare skills. Moreover,
degree is sometimes non-monotonic in the size of one\textquoteright s
skill set: individuals with more skills will sometimes have fewer
collaborators than those with many skills. These examples illustrate
the value of a model with a flexible notion of skill heterogeneity. 

In the general model, where the population of problem solvers has
an arbitrary distribution of skills, I show that an individual\textquoteright s
degree on the collaboration network is a supermodular function of
her set of skills. I then use a set of examples to illustrate how
a more complex model of human capital allows for a more realistic
picture of the relationship between skills and collaborative production.
In particular, when individuals are allowed to have multiple skills,
it is possible for the relationship between skills and collaboration
to be not only nonlinear, but \emph{non-monotonic}. 

I then use a special case\textemdash called the Bernoulli Skills Model\textemdash to
show how the process of collaboration can exaggerate small differences
in skills to generate surprisingly large gaps in observed degree.
Small initial differences in individual skill sets create large differences
in the distribution of links, meaning that the distribution of links
in the collaboration network may be skewed, even when the distribution
of skills is not. Moreover, as problems become more difficult, the
degree distribution of the network becomes more skewed, and the network
becomes dominated by a few, high-degree superstars. Using a second
special case, called the Ladder Model, I observe another pattern\textendash as
skills become increasingly hierarchical, the network becomes increasingly
skewed, and superstars emerge. 

There is already an extensive literature on the formation of collaboration
networks, which I have outlined in more detail below. This model provides
insights beyond those offered by these models. On an individual level,
this model provides a useful connection between agent heterogeneity
and outcomes. Preferential attachment provides a dynamic in which
individuals who are already successful might become more successful.
The model presented here connects link formation to skill complementarity,
and thus provides insights into what happens before the preferential
attachment dynamic takes hold. This is pertinent in collaborative
communities, because people enter the system all the time. The question
of which of those individuals will eventually become a star has not
been completely explored in a theoretical framework. On a more global
level, this paper provides several predictions about how collaboration
networks will differ in different populations. The model predicts
that degree distributions will become more skewed as problems become
more difficult. There is some indication that this is true in academic
fields. In particular, data on economists suggests that the degree
distribution of the network has become more unequal over time. If
we believe that research problems have also become more complex over
that time period, then this theory suggests a possible explanation
for that change. By linking the distribution of skills in the population
to the structure of the collaboration network, this model provides
insights into what generates the differences between communities.

The framework and results I present here also have implications for
our understanding of heterogeneity in labor markets. As firms in knowledge-based
industries shift towards team-based production, labor productivity
becomes increasingly tied to collaborative effort. I show that when
workers are allowed to have multiple skills, the resulting distribution
of productivity is much different than when we assume that individuals
have a \textquotedblleft type\textquotedblright{} (e.g.: hunter or
fisherman). Moreover, because skills are synergistic and work in combination,
the relationship between an individual\textquoteright s skill set
and her wages cannot be captured by a linear pricing schedule, where
each skill is valued in isolation. This means that treating skills
in combination may explain more wage variation than more traditional
economic understanding of labor heterogeneity. I include a brief discussion
of these issues below. 

It is also worth noting the broader contribution of this model\textemdash it
provides a general framework for understanding how individual heterogeneity
affects both network structure and collaborative production. Particular
specifications of the fundamentals of the model\textemdash the distribution
of skills, the distribution of problems, the functional form of the
production and payoff functions\textemdash will produce different
outcomes. Thus, the framework introduced here has the potential to
spawn additional work, exploring those relationships. I conclude with
a discussion of some of those possible extensions.

\section{Literature}

This paper draws upon two distinct literatures. The first is the literature
linking skills to productivity and labor market outcomes. This literature
treats skills in one of several canonical ways. One branch of the
literature treats skills in terms of occupation: an individual is
categorized according to what job they perform, and is allowed only
one category. For example, an employee might be a manager or a line
worker, but not a combination of both. This model works well in the
context of manufacturing, where each individual has a well-defined
job, but works less well in the context of knowledge-based production.
Another branch of the literature treats skill level as a one-dimensional
quantity, ultimately placing individuals in either high or low skilled
baskets. This includes the wealth of signaling literature, starting
with \citealp{spence1973job}. This method for modeling skills is
convenient, because it allows a partial ordering over individuals.
However, it seems a better model for some skills (eg: piece work)
than others (eg: problem solving). A third literature takes a kind
of hybrid approach, allowing individuals to have an ability level
in two different skills. This includes the vast literature stemming
from \citealp{Roy1951}. Roy imposes a condition that individuals
use only one of their skills at a time. More recent work has allowed
individuals to use both skills at once. Examples include \citealp{Lazear2004,Lazear2005}
and \citealp{Astebro2011} 

This paper is closest to the last category of models, but represents
a significant generalization from this approach. In particular, individuals
are not confined to two skills, and may have an arbitrarily complex
basket of skills. This model of skill heterogeneity is appealing,
because it better represents the rich, multi-faceted skill sets of
workers in knowledge-based industries. Moreover, as I will show in
Section \ref{sec: Value of Model with Multiple Skills}, allowing
individuals to have three or more skills has a significant impact
on the distribution of productivity in the collaborative community,
making it a crucial extension of previous labor models. Moreover,
by altering the distribution of skills in the population, this model
encompasses all three canonical models of labor heterogeneity, while
also allowing for other, more complex types of heterogeneity.\footnote{It is simple to see how this model subsumes the model with two discrete
skills. However, it is flexible enough to subsume the other two models
as well. In particular, by allowing for skills to build off of one
another (as in Section \ref{sec:Skill-Ladders}), then it is possible
to move away from a binary skill set to one where individuals have
an ability ordering in one or more skill areas.}

This more multi-dimensional model of skills also interfaces well with
the empirical literature on collaborative linking decisions. When
problems are hard, problem-solvers tend to seek out those who have
skills different than their own (\citealp{Rivera:2010aa}). \citealp{Moody2004}
suggests that individuals facing difficult problems seek out those
with complementary skills because it is easier to work with other
researchers than it would be to obtain a new set of skills themselves.
Here, I take this need for skill complementarity in collaboration
as given. 

This paper also contributes to the literature on social network formation.
This literature is vast, and growing rapidly. One branch of the literature
focuses on the role of individual decisions in network formation.
For example, in the Connections Model (\citealp{Jackson1996}) players
gain a benefit for both direct and indirect links and pay a cost for
each direct link made. In the Coauthor Model, \citealp{Jackson1996},
players must allocate their effort across multiple projects where
the payoff from a paper is inversely related to the number of links
the two coauthors have. In \citealp{Goyal2001}, firms choose a set
of links and an effort level to put into research and development,
given that such links general both perfect and imperfect spillovers.
The resulting networks are useful because individuals in decision-based
models respond to incentives, allowing us to see how changes in the
community change the fundamentals of network structure, such as density. 

Another branch of the literature allows links to form via a stochastic
process. In the preferential attachment model, entering nodes link
to existing nodes with a probability proportional to their current
degree. Nodes who gain an early benefit from extra links tend to amass
even more links over time. This model and its derivatives provide
a glimpse into the mechanisms behind the observed skew in degree distribution
(see, among others, \citealp{Barabasi1999}, \citealp{Jackson2007}
and \citealp{Ramasco2007}). The incumbency model of network growth
presented in \citealp{Guimera2005} captures a different dynamic of
collaborative network formation--the role of incumbency. Individuals
create new links via serial team formation. In every round, they form
teams taking into account both whether a potential team member is
new to the community, and any previous working relationships. 

There has also been progress made in linking network heterogeneity
to other kinds of individual heterogeneity: \citealp{Jackson2005}
considers heterogeneity in the costs and benefits of link formation,
\citealp{Galeotti2006353} considers heterogeneity in link costs in
Nash Networks, and \citealp{Carayol2009414} consider the effects
of geographic inhomogeneity. However, none of these previous works
consider a type of heterogeneity crucial to an economic understanding
of collaborative production: labor heterogeneity. 

The contributions of this paper to the literature on network formation
take place on two different levels. On a global level, this model
links differences in network structure to differences in the underlying
population. This provides new insights into why different collaboration
networks have different structures. On the individual level, allowing
individuals to have heterogeneous skill sets allows us to explore
the determinants of network position, and in particular, which individuals
become stars. 

In this paper, I combines these two literatures to create a powerful,
yet flexible theoretical framework for understanding how labor heterogeneity
shapes network formation. The field of labor economics provides insight
into skill heterogeneity, giving us perspective on the role of skill
skill sets in determining an individual's role in a collaboration
networ, and conversely, considering collaboration networks can give
greater understanding of how those skill sets are translated into
labor market outcomes in the case of collaborative, team-based production.

\section{\label{sec:Model}A General Model of Skills, Problem Solving, and
Collaboration Networks}

\subsection{Inputs: Problem Solving Population and Problems}

Let $I=\left\{ 1,2,...N\right\} $ be the set of problem solvers (individuals). 

Let $S=\left\{ a_{1}...a_{M}\right\} $ denote the (finite) set of
all skills. 

A individual $i$'s \emph{skill set }is the subset of those skills
she possesses, $A_{i}\subseteq S$. The frequency of skill set $A$
in the population is given by $\Psi\left(A\right)$, a probability
measure with support $\Sigma\left(\Psi\right)\subseteq2^{S}$--that
is, $\Psi\left(A\right)$ is the fraction of the individuals in $I$
who have the skill set $A\subseteq S$.\footnote{Formally, $\Psi$ is a \emph{frequency }distribution, rather than
a probability distribution--that is, $\Psi$ is a \emph{realized}
distribution of skill sets across the population. The distinction
between frequency and probability distributions disappears when $N$
is large--using a frequency distribution allows me to also make statements
about small $N$ as well.}

Each individual is endowed with a copy of a problem requiring a subset
of the skills available in the population, $\omega\subseteq S$.\footnote{For simplicity, I assume that all individuals face the same problem.
The results are identical if each individual faces a different problem
drawn from a known distribution of problems, $\Omega$, with support
$2^{S}$, and collaborators are chosen \emph{ex ante}. If they choose
their collaborators \emph{ex post}, then the results are similar,
if notationally more complex. } 

Thus, the inputs to the model are a set of skills used to solve problems
($S$) and a population of problem solvers $\left(\Psi\right)$.

\subsection{Collaboration and Problem Solving}

A \emph{collaboration} is a subset of the problem solvers, $C\subset I$.
A collaboration can \emph{solve} a problem if together they possess
all of the required skills--that is, if the problem is solved if $\omega\subseteq\bigcup_{j\in C_{i}}A_{j}$
.

The problem yields a payoff of 1 if solved. If a problem solver can
solve her problem alone (that is, if $\omega\subseteq A_{i}$) then
she keeps the entire payoff. If she solves it with the help of others,
she splits the payoff evenly with her collaborators, giving each a
share of $\frac{1}{\left|C_{i}\right|}$.\footnote{This particular distribution scheme has a number of points to recommend
it: it does not require that agents have control over the way payoffs
are split (advantageous when payoffs are non-monetary), and it gives
each individual her Shapely value for the coalition. Many alternative
splitting schemes will be behaviorally identical (see below).}Since others in the community face their own problems, a problem solver
may be asked to help with other problems as well. Thus, individual
$i$'s payoff is the sum the payoff she gets from solving her own
problem, plus the payoffs she gets from collaborating with others
on their problems: 
\[
u_{i}=\frac{1}{\left|C_{i}\right|}+\sum_{j\ne i\,st\,i\in C_{j}}\frac{1}{\left|C_{j}\right|}
\]
\footnote{It is important to highlight that this payoff is not equivalent to
a market wage. A model producing equilibrium wages would require considerable
additional machinery, including time constraints and search costs,
and is therefore beyond the scope of the current work.} An individual chooses her set of collaborators $\left(C_{i}\right)$
from the pool of all possible collaborators $\left(I\right)$\footnote{Note that this model assumes that an individual looking for collaborators
knows the skill sets of all her colleagues, and thus does not address
the issue of search. This assumption is not problematic when collaborative
communities are small or tight-knit (eg: within firm or in research
subfields). However, in other contexts--particularly ones where particular
skill combinations are very rare--this assumption of global information
may start to become unrealistic. A model incorporating local search
may alter the results presented here, and would be a fertile area
for study.} to maximize her utility. Note that her payoff to solving her own
problem is always positive, and thus it is always incentive compatible
for her to solve the problem. Since each individual controls only
her own collaborative decisions, the utility-maximizer chooses $C_{i}$
to minimize the number of collaborators she must work with on her
own problem--in other words, she chooses a minimal subcover of the
set of skills she lacks--$A_{i}^{c}=\omega_{i}\backslash A_{i}$.
Let $\mathbb{C}_{i}$ denote the set of all minimal subcovers of $A_{i}^{c}$.
If there exist multiple minimal subcovers (ie: if $\left|\mathbb{C}_{i}\right|>1$)
then I assume that the individual chooses $C_{i}^{*}\in\mathbb{C}_{i}$
at random.\footnote{Since the individual indifferent between minimal subcovers, this choice
at random follows convention. The results that follow are not sensitive
to this assumption. In particular, if an individual is biased towards
choosing collaborators who are not as busy (eg: those with fewer links)
then the results that follow hold precisely, rather than on average.
If an individual is biased towards choosing collaborators who are
popular (eg: have many links) than the results that follow are qualitatively
similar, and if anything are more striking. }

It is worth noting that while I have chosen to model the motivate
the decision to minimize the set of collaborators via an equal split
of payoffs, the results that follow are unchanged under any distribution
scheme which induces problem solvers to minimize the number of collaborators
they use. So, for example, if collaborating with additional individuals
is costly (eg: in terms of time or communication), then the results
that follow will still hold with no modification. The results will
not hold under any payoff structure where it is not individually rational
to minimize the set of collaborators used to solve a problem.

\subsection{Complementary Skills Networks}

For a given a set of collaborations, $C=\left\{ C_{1}...C_{N}\right\} $,
the collaboration network is represented by an adjacency matrix, $g\left(C\right)$,
where $g_{ij}\left(C\right)=1$ if $j\in C_{i}$. Note that the network
is directed--since $j\in C_{i}$ does not necessarily imply $i\in C_{j}$,
it may be that $g_{ij}\left(C\right)\ne g_{ji}\left(C\right)$. However,
the links are mutual, in the sense that $j$ will never want to terminate
a link (see Section \ref{sec:Stability-and-Efficiency} for further
discussion). When all collaborators are chosen optimally (that is,
when $C_{i}\in\mathbb{C}_{i}\,\forall i$), I will call the result
a \emph{complementary skills network }(or \emph{complementarity network}
for short). 
\begin{defn*}
A network, $g\left(C\right)$, is a \emph{complementary skills network
(complementarity network) }if each individual in the network chooses
a minimal set of collaborators required to solve her problem--eg:
if $C_{i}\in\mathbb{C}_{i}\,\forall i$.
\end{defn*}
Since the set of minimal subcovers each individual ($\mathbb{C}_{i}$),
depends on the distribution of skills in the population, I use $\Gamma\left(\Psi\right)$
to denote the set of complementary skills networks for a particular
distribution of skills, $\Psi$. In other words, $\Gamma\left(\Psi\right)$
is a set of networks over which all individuals are maximizing utility.
In all of the measures that follow, I will average over all networks
in $\Gamma\left(\Psi\right)$.\footnote{Note that this averaging has the effect of dividing the space of all
possible networks into equivalent classes of networks that are functionally
equivalent from the perspective on individuals in the community. This
is a useful coarse-graining technique that may have applications outside
of the current model.}

\subsection{Network Measures}

Recall that I will be looking at two different questions: first, how
an individual's position on the network depends on her set of skills,
and second how the overall structure of the network depends on the
distribution of skills in the population. For both of these questions,
I will focus on degree in the network. While degree is clearly not
the only network measure defining an individual's position in the
collaborative community, an exploration of other measures is outside
the scope of the current work. 

First, I will look at how an individual's average in-degree on the
networks in $\Gamma\left(\Psi\right)$ depends on her skill set. I
will denote individual $i$'s in-degree on a particular network as
$d_{i}$. Her average in-degree will be $E\left[d_{i}\right]$, where
the expectation is taken over all networks in $\Gamma\left(\Psi\right)$--that
is, all networks that the individual is indifferent between. In Section
\ref{sec:Skills and Degree}, I will consider the mapping between
an individual's skill set an her expected in-degree in a particular
problem-solving population, $E\left[d_{i}\right]=f\left(A_{i},\Psi\right)$. 

I will also consider how the overall distribution of in-degree depends
on the distribution of skills in the population. Let $\Delta$ denote
the distribution of expected in-degree. That is, $\Delta\left(d\right)$
is the fraction of the individuals who have expected in-degree $d$,
where the expectation is taken over all $g\left(C\right)\in\Gamma\left(\Psi\right)$.\footnote{Alternatively, we might plot the distribution of degree across all
networks $g\in\Gamma\left(\Psi\right)$. That is, we could set $\Delta\left(d\right)=\sum_{g\in\Gamma\left(\Psi\right)}\delta_{g}\left(d\right)$
where $\delta_{g}\left(d\right)$ is the fraction of nodes in network
$g$ with degree $d$. This choice does not affect the results.}

Before continuing, a brief word about network notation is in order.
First, note that for ease of reading I will usually drop the argument
of $g\left(C\right)$. I will denote a link from $i$ to $j$ by $ij$.
Using a slight abuse of notation, I will use $g$ to refer to both
the adjacency matrix (as above) \emph{and} the set of links in the
network--that is, $ij\in g$ if $i$ is connected to $j$ in the network
$g$. In a similar abuse of notation, I will use $g-ij$ to represent
the network that results when the link $ij$ is removed from an existing
network, $g$, and $g+ij$ to represent the network that results when
the link $ij$ is added to the existing network, $g$. Finally, for
clarity in the exposition, I will refer to in-degree simply as ``degree''.
This last point should cause no significant confusion--I use in-degree
because it has a clear, empirical interpretation, but the results
qualitatively similar if we consider an individual's degree in the
directed network (the sum of his in-degree and out-degree), or use
the degree of the individual in a network where directed links are
projected into undirected links.

\subsection{Example}

An example will help clarify the structure of this model. Suppose
all of the individuals in the population face the same problem requiring
three skills: $\omega=\left\{ a,b,c\right\} $ $\forall i$ --for
example, developing a web application might require programming skills,
user interface design skills, and marketing skills. Suppose the distribution
of skills is such that everyone in the population has at least one
skill, but no one has all of the skills required. In other words,
every individual in the population has something to add, but none
of them can solve the problem on their own. Suppose further that each
skill combination is equally likely--in other words, $\Psi\left(A\right)=\frac{1}{6}$
for all $A\in\left\{ \left\{ a\right\} ,\left\{ b\right\} ,\left\{ c\right\} ,\left\{ ab\right\} ,\left\{ ac\right\} ,\left\{ bc\right\} \right\} $
and $\Psi\left(A\right)=0$ otherwise (see Figure \ref{fig:PopulationExample}).

\begin{figure}[h]
\includegraphics[clip,width=0.9\textwidth]{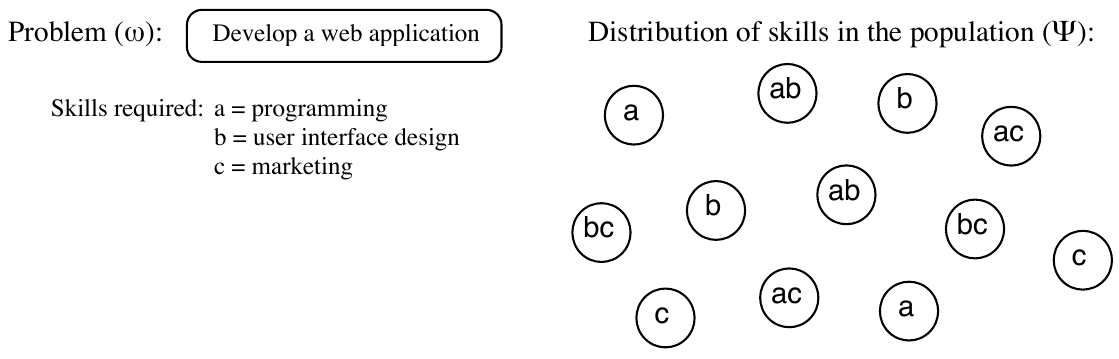}

\caption{\label{fig:PopulationExample}An example of a problem and population
of problem solvers}
\end{figure}
In this particular population of problem solvers, each individual
needs exactly one collaborator to solve the problem--that is, someone
with skill set $\left\{ a\right\} $ must link to someone with the
skill set $\left\{ b,c\right\} $, someone with skill set $\left\{ b,c\right\} $
may choose from those with skill sets $\left\{ a\right\} ,\left\{ a,b\right\} $,
and $\left\{ a,c\right\} $, and so on. The problem solver is indifferent
between any two individuals who have the skills that she needs. By
linking those who collaborate on problems, we obtain a collaboration
network. There will be one such network for every set of optimal collaborations.
Figure \ref{fig:TwoNetworkExamples} shows two collaboration networks
for this population of problem solvers and two optimal choices of
collaborators. The set of all such networks for a given population
of problem solvers is denoted by $\Gamma\left(\Psi\right)$.

\begin{figure}[h]
\includegraphics[clip,width=0.9\columnwidth]{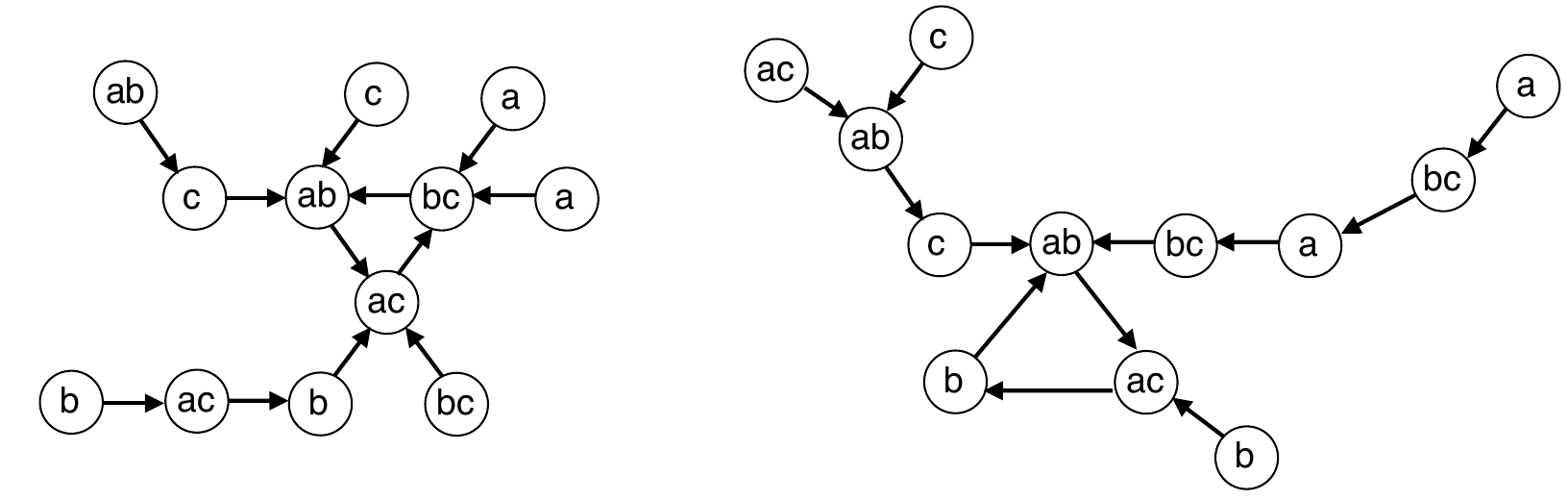}

\caption{\label{fig:TwoNetworkExamples}An example of two networks formed from
population/problem in Figure \ref{fig:PopulationExample}}
\end{figure}

\subsection{Discussion}

This model produces outcomes consistent with several empirical facts
about collaboration and problem solving. First, the model predicts
that collaboration will be more important when the average individual
has a thinner slice of the total skills required to solve a problem--in
other words, as problems become more difficult, collaboration networks
will become more densely connected. This prediction is born out in
the data--collaborative work has become increasingly common in a variety
of academic fields (see \citealp{BARABASI2002} (physics), \citealp{Grossman1995}
(mathematics), \citealp{Moody2004} (sociology), \citealp{Acedo2006b}
(management science), \citealp{Laband2000} and \citealp{Goyal2006}
(economics)). Moreover, the literature supports a connection between
this increased reliance on collaboration and the difficulty of problems
faced, specifically the increasing complexity of required methodologies
(\citealp{Laband2000}and \citealp{Moody2004}). This highlights one
of the advantages of using this model of collaboration network formation:
it allows for a direct connection between problem difficulty and collaborative
effort.

\subsubsection{\label{sec:Stability-and-Efficiency}A Note on Stability and Efficiency
of the Complementarity Network}

Before considering any specific questions about the complementarity
network, it is worth considering it's stability and efficiency. \citealp{Jackson1996}
introduce an equilibrium concept of network stability, called\emph{
pairwise stability.} Briefly, a network is pairwise stable if no individual
would prefer to terminate an existing link, and if no pair of individuals
would prefer to add a link (see Appendix A for a more formal definition
in the case of a directed network). Pairwise stability implies that
links are mutual, because both individuals involved agree to maintain
the link. Theorem \ref{thm:Pairwise stability} states that any complementary
skills network is pairwise stable, implying that all links in the
network are mutual. Because the equilibrium links are mutually beneficial,
one could functionally think of a complementarity network as either
directed (because that is how it is constructed) or undirected (because
that it is how it functions). For ease of reading, I will usually
omit the directional arrows from networks pictured in this paper.

Moreover, any complementary skills network is strongly efficient--that
is, the population extracts the maximum possible value from the network.
This result--that equilibrium networks are efficient--contrasts with
two other models of social network formation--\citealp{Jackson1996}
and \citealp{Goyal2001}--in which pairwise stable networks tend to
have more links than is efficient.
\begin{thm}
\label{thm:Pairwise stability}

Any complementary skills network, $g\in\Gamma\left(\Psi\right)$,
is pairwise stable and strongly efficient. \end{thm}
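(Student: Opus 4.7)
The proof naturally splits into two parts, and both lean on the fact that each problem-solver selects $C_i$ to be a \emph{minimal} subcover of $\omega\backslash A_i$. I would verify the two pairwise-stability conditions in turn. For the ``no severance'' condition, consider a directed link $ij\in g$, i.e., $j\in C_i$. From $i$'s side, removing $j$ breaks the minimality of the cover, so $i$ can no longer solve her problem and her payoff on it drops from $1/|C_i|$ to $0$. From $j$'s side, she loses her positive share $1/|C_i|$ of $i$'s payoff, with no offsetting gain, since the model imposes no per-link cost and $j$'s own collaborator set is unaffected. Both sides strictly lose, so neither endpoint wants to sever.

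For the ``no addition'' condition, consider $j\notin C_i$ and the prospective link $ij$. Adding it expands $i$'s collaborator set to $C_i\cup\{j\}$, so her share of her own payoff falls strictly from $1/|C_i|$ to $1/(|C_i|+1)$. Thus $i$ alone vetoes the addition, and pairwise stability is established without needing to check $j$'s preferences on the addition side.

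For strong efficiency, I would exploit a simple aggregation observation about total welfare $\sum_i u_i$: each problem, when solved, contributes exactly $1$ in total (distributed among its collaborators by the equal-split rule), and contributes $0$ when unsolved. Hence aggregate welfare equals the number of problems solved. In any complementary skills network, every individual for whom a subcover of $\omega\backslash A_i$ exists in the population chooses one and has her problem solved, while no configuration of collaborations can solve a problem for an individual for whom no such subcover exists anywhere in $\Psi$. So the number of solved problems already attains the population-wide upper bound, giving strong efficiency.

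The main hurdle here is not arithmetic but bookkeeping: I have to pin down what ``adding'' and ``removing'' a link mean for the directed adjacency matrix (matching the appendix's formal pairwise-stability definition) and confirm that ``$ij\in g$'' consistently means ``$j\in C_i$.'' Once that convention is fixed, every deviation changes a specific $C_i$ in a well-defined way, the strict inequalities follow from the minimality of $C_i$ and from $1/(|C_i|+1)<1/|C_i|$, and the efficiency argument needs nothing beyond the observation that each solved problem contributes a net $1$ to aggregate welfare.
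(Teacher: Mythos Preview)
Your proposal is correct and follows essentially the same argument as the paper's appendix: minimality of $C_i$ forces $i$ to keep every outgoing link and to refuse every new one, while the positive share $j$ receives makes her unwilling to sever, and efficiency follows from the observation that total welfare equals the number of solved problems. Your efficiency argument is in fact slightly more careful than the paper's, since you allow for individuals whose missing skills admit no subcover in the population (the paper simply asserts the maximum value is $N$), but this is a refinement rather than a different route.
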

\begin{proof}
See Appendix A
\end{proof}

\section{\label{sec:Skills and Degree}Skills and Degree in the Collaboration
Network}

In this section, I consider the relationship between an individual's
skill set, and her degree in the collaboration network.

\subsection{Skills and Degree: An Example}

Before presenting general results, it is useful to see an example.
Consider the example from the previous section: each individual faces
a problem requiring three skills, $S=\left\{ a,b,c\right\} $, and
they each have one or two of those skills, so that $\Psi$ has support
$\left\{ \left\{ a\right\} ,\left\{ b\right\} ,\left\{ c\right\} ,\left\{ ab\right\} ,\left\{ ac\right\} ,\left\{ bc\right\} \right\} $. 

An individual's in-degree on the network will depend on the number
of people who need her skills (the demand for a subset of her skills)
and the number of other people who can provide those skills (the supply
of that subset of skills). For example, consider someone with the
skill set $\left\{ a\right\} $. Her skills are in demand by anyone
who needs skill $a$--in this example, anyone with skill set $\left\{ b,c\right\} $.
A person with skill set $\left\{ b,c\right\} $ may collaborate with
anyone who has skill $a$, including those with skill sets $\left\{ a\right\} ,\left\{ a,b\right\} ,$
or $\left\{ a,c\right\} $. So in this example, the expected degree
of an individual with skill set $\left\{ a\right\} $ is 
\[
E\left[d\left(\left\{ a\right\} \right)\right]=\frac{\Psi\left(\left\{ b,c\right\} \right)}{\Psi\left(\left\{ a\right\} \right)+\Psi\left(\left\{ a,b\right\} \right)+\Psi\left(\left\{ a,c\right\} \right)}
\]
 Similarly, an individual with the skill set $\left\{ a,b\right\} $
can help anyone who needs skill $a$, anyone who needs skill $b$,
and anyone who needs skills $a$ and $b$, yielding expected degree{\small{}
\begin{eqnarray*}
E\left[d\left(\left\{ a,b\right\} \right)\right] & = & \frac{\Psi\left(\left\{ b,c\right\} \right)}{\Psi\left(\left\{ a\right\} \right)+\Psi\left(\left\{ a,b\right\} \right)+\Psi\left(\left\{ a,c\right\} \right)}+\frac{\Psi\left(\left\{ a,c\right\} \right)}{\Psi\left(\left\{ b\right\} \right)+\Psi\left(\left\{ a,b\right\} \right)+\Psi\left(\left\{ b,c\right\} \right)}+\frac{\Psi\left(\left\{ c\right\} \right)}{\Psi\left(\left\{ b,c\right\} \right)}
\end{eqnarray*}
}Note that an individual with skills $a$ and $b$ will have more
links, on average, than an individual with skill $a$ and an individual
with skill $b$ put together: $E\left[d\left(\left\{ a,b\right\} \right)\right]>E\left[d\left(\left\{ a\right\} \right)\right]+E\left[d\left(\left\{ b\right\} \right)\right]$.
This is because an individual with both skills can help anyone who
needs skill $a$, anyone who needs skill $b$, \emph{and anyone who
needs both. }This means that an individual with both of those skills
will have, on average, more than twice the number of links than a
person with either of those skills in isolation.

\subsection{Skills and Degree: General Results}

A similar type of result holds more generally. Theorem \ref{thm:SuperModularity}
states that an individual's expected degree in a complementary skills
network is a supermodular function of her set of skills. That means
that regardless of the skill set required for the problem or the distribution
of skills in the population, an individual with skill set $A\cup B$
will have at least as many links as individuals with skill sets $A$
and $B$ put together. The intuition for this result is similar to
the above example--the set of all problems that can be solved by someone
with the skill set $A\cup B$ includes those that can be solved by
someone with skill set $A$, and those that can be solved by someone
with skill set $B$, plus those requiring some skills from \emph{both}
sets. 
\begin{thm}
\label{thm:SuperModularity}For any set of skills, $S$, and distribution
of those skills, $\Psi$, an individual's expected degree over the
networks in $\Gamma\left(\Psi\right)$ is a supermodular function
of her set of skills. That is, $Ed\left(A\cup B\right)+Ed\left(A\cap B\right)\ge Ed\left(A\right)+Ed\left(B\right)$.\end{thm}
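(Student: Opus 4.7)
The plan is to prove the theorem by reducing to the equivalent marginal-returns characterization of supermodularity: for any $A \subseteq A'$ and any $s \notin A'$,
\[
Ed(A \cup \{s\}) - Ed(A) \le Ed(A' \cup \{s\}) - Ed(A').
\]
Equivalently, one verifies the four-point inequality $Ed(A \cup B) + Ed(A \cap B) \ge Ed(A) + Ed(B)$ directly, exploiting the fact that expected in-degree is additive across the other individuals in the population.

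By linearity of expectation, $Ed(A) = \sum_{j \ne i} P[i \in C_j^*]$, where the probability is over the uniform tie-breaking among minimum subcovers. Because non-negative sums of supermodular functions are supermodular, it suffices to show that $P[i \in C_j^*]$ is supermodular in $A_i$ for each fixed $j$ and each fixed realization of the other individuals' skill sets. The engine of the argument is the observation, verified by direct case analysis, that for any fixed $T \subseteq S$ the indicator $\mathbf{1}[T \subseteq A]$ is supermodular in $A$. Writing $B_j = \omega \setminus A_j$ for the skills $j$ lacks, the task is then to express $P[i \in C_j^*]$ as a non-negative combination (in $A_i$) of such indicators in a way that preserves supermodularity.

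The cleanest subcase is when $B_j \subseteq A_i$: $i$ can cover $B_j$ alone, every minimum cover has size one and consists of a single self-coverer, and $P[i \in C_j^*]$ reduces to a positive constant times $\mathbf{1}[B_j \subseteq A_i]$, which is supermodular. The main obstacle is the multi-person-cover subcase, where $B_j \not\subseteq A_i$ and $i$ appears as one of several collaborators: here $P[i \in C_j^*]$ depends on the combinatorics of minimum covers and changes discretely when adding a skill to $A_i$ shrinks the minimum cover size (for example from some $k \ge 2$ down to $1$), producing a jump in the probability. I would handle this with a case split on (i) whether $B_j \subseteq A_i$, (ii) whether some $k \ne i$ satisfies $B_j \subseteq A_k$, and (iii) the detailed combinatorial structure of multi-person minimum covers when neither (i) nor (ii) holds. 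In each case the marginal gain from adding a skill $s$ to $A_i$ should be non-decreasing in $A_i$, because $s$ is more valuable to $i$'s role precisely when $A_i$ already contains skills that combine with $s$ to let $i$ replace multiple collaborators --- exactly the complementarity that produces supermodularity. Summing the per-$j$ supermodular contributions over $j \ne i$ then delivers the theorem.
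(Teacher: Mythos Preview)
Your decomposition $Ed(A_i)=\sum_{j\ne i}P[i\in C_j^*]$ is correct, and in the regime where every $j$ can be covered by a single collaborator your per-$j$ argument goes through: there $P[i\in C_j^*]=\tfrac{1}{m_j+1}\,\mathbf{1}[B_j\subseteq A_i]$ with $m_j$ the number of \emph{other} single-coverers, and $\mathbf{1}[B_j\subseteq A_i]$ is indeed supermodular in $A_i$. This matches the scope of the paper's in-text proof.

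The genuine gap is the multi-person-cover case, which you correctly flag as the obstacle but do not resolve---and in fact your intended resolution cannot work, because the per-$j$ quantity $P[i\in C_j^*]$ is \emph{not} supermodular in $A_i$ there. Concretely: take $\omega=\{a,b,c\}$, let $j$ have no skills, and let the rest of the population consist of one individual each with $\{a,c\}$, $\{b,c\}$, $\{a,b\}$. The minimum cover size for $j$ is $2$, and the three pairwise covers not involving $i$ are all valid. With $A_i=\emptyset$ one gets $P=0$; with $A_i=\{a\}$ the only $i$-cover is $\{i,\text{(the }\{b,c\}\text{ person)}\}$, so $P=1/4$; by symmetry $A_i=\{b\}$ gives $P=1/4$; with $A_i=\{a,b\}$ the $i$-covers are $\{i,\text{(}\{a,c\}\text{)}\}$ and $\{i,\text{(}\{b,c\}\text{)}\}$, so $P=2/5$. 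Then $P(\{a,b\})+P(\emptyset)=2/5<1/2=P(\{a\})+P(\{b\})$, so supermodularity fails at the per-$j$ level. Your heuristic that ``$s$ is more valuable when $A_i$ already contains complementary skills'' is defeated by the denominator effect: enlarging $A_i$ creates new covers containing $i$, but it also enlarges the pool of minimum covers among which $j$ randomizes.

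The paper's proof avoids this trap by aggregating \emph{before} checking supermodularity. It writes
\[
Ed(A)\;=\;\sum_{C\subseteq A}\frac{\delta(C)}{\sigma(C)},
\]
where $\delta(C)$ and $\sigma(C)$ are the demand for and supply of the bundle $C$ and depend only on $C$ and $\Psi$, not on $A$. Any function of the form $A\mapsto\sum_{C\subseteq A}f(C)$ with $f\ge 0$ is automatically supermodular: partition $\{C\subseteq A\cup B\}$ into $\{C\subseteq A\}$, $\{C\subseteq B,\ C\cap(B\setminus A)\ne\emptyset\}$, and the remainder $\{C:\ C\cap(A\setminus B)\ne\emptyset\text{ and }C\cap(B\setminus A)\ne\emptyset\}$; the first two pieces recombine with $Ed(A\cap B)$ to give exactly $Ed(A)+Ed(B)$, and the third is a nonnegative surplus. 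The supermodularity therefore lives in the \emph{index set} of the sum, not in the individual link probabilities, which is why the paper's partition argument succeeds where the per-$j$ approach does not.
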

\begin{proof}
Here, I will prove the result for the case where an individual needs
only one collaborator to solve her problem. For the sake of clarity,
also I consider the case where $A\cap B=\emptyset$. The proof for
the general result is similar, and can be found in Appendix B. Since
$d\left(A\cap B\right)=d\left(\emptyset\right)=0$, we need to show
that $Ed\left(A\cup B\right)\ge Ed\left(A\right)+Ed\left(B\right)$.
Consider $d\left(A\cup B\right)$. The fraction of the population
needing $C\subseteq A\cup B$ is $\delta\left(C\right)=\Psi\left(S\backslash C\right)$.
The fraction who can supply the set $C$ is $\sigma\left(C\right)=\sum_{D\subseteq S\backslash C}\Psi\left(C\cup D\right)$.
Thus 
\[
E\left[d\left(A\cup B\right)\right]=\sum_{C\subseteq A\cup B}\frac{\Psi\left(S\backslash C\right)}{\sum_{D\subseteq S\backslash C}\Psi\left(C\cup D\right)}=\sum_{C\subseteq A\cup B}\frac{\delta\left(C\right)}{\sigma\left(C\right)}
\]

We can partition $C\subseteq A\cup B$ into three categories: 

1) $C\subseteq A$ 

2) $C\subseteq B$ 

3) $C\subseteq A\cup B$, $C\nsubseteq A$ , $C\nsubseteq B$. 

This division gives us the following: 
\begin{eqnarray*}
E\left[d\left(A\cup B\right)\right] & = & \sum_{C\subseteq A}\frac{\delta\left(C\right)}{\sigma\left(C\right)}+\sum_{C\subseteq B}\frac{\delta\left(C\right)}{\sigma\left(C\right)}+\sum_{C\subseteq A\cup B\,and\,C\cap A,\,C\cap B\ne\emptyset}\frac{\delta\left(C\right)}{\sigma\left(C\right)}\\
 & = & E\left[d\left(A\right)\right]+E\left[d\left(B\right)\right]+\phi\\
 & \ge & E\left[d\left(A\right)\right]+E\left[d\left(B\right)\right]
\end{eqnarray*}

\end{proof}
This theorem suggests an immediate corollary. 
\begin{cor}
\label{cor:Monotonicity} Adding skills to an individual's skill set
will never decrease her average degree in a cost minimizing collaboration
network.\end{cor}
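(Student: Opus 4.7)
The plan is to derive the corollary directly from Theorem~\ref{thm:SuperModularity}. It suffices to show that adding a single skill $s\notin A$ to an existing skill set $A$ cannot lower expected degree, since iterating this single-skill statement handles the case of adding any number of skills to $A$. Write $A' = A\cup\{s\}$, so that $A' \supseteq A$ was the arbitrary enlargement we care about.

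Apply supermodularity with the two sets $A$ and $B=\{s\}$. Then $A\cup B = A'$ and $A\cap B = \emptyset$, so Theorem~\ref{thm:SuperModularity} gives
\[
E\bigl[d(A')\bigr] + E\bigl[d(\emptyset)\bigr] \;\ge\; E\bigl[d(A)\bigr] + E\bigl[d(\{s\})\bigr].
\]
Now I would observe two things about the right-hand and boundary terms. First, $E[d(\emptyset)] = 0$: an individual with no skills cannot be contained in any minimal subcover of any other individual's skill deficit $\omega \setminus A_j$, so no one ever chooses her as a collaborator and her in-degree is identically zero on every $g\in\Gamma(\Psi)$. Second, $E[d(\{s\})] \ge 0$ trivially, since degree is a non-negative integer on each realization. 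Substituting these two facts into the supermodularity inequality yields
\[
E\bigl[d(A')\bigr] \;\ge\; E\bigl[d(A)\bigr] + E\bigl[d(\{s\})\bigr] \;\ge\; E\bigl[d(A)\bigr],
\]
which is the desired monotonicity for a one-skill enlargement. Chaining this inequality along any finite sequence $A = A_0 \subseteq A_1 \subseteq \cdots \subseteq A_k = A'$ that adds one skill at a time finishes the proof.

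There is essentially no serious obstacle here, since Theorem~\ref{thm:SuperModularity} already does the heavy lifting. The only subtle point worth stating explicitly is that $E[d(\emptyset)] = 0$; this uses the fact that a collaborator set $C_j$ is chosen to be a \emph{minimal} subcover of $\omega\setminus A_j$, and no minimal subcover contains a vacuous element, so a skill-less individual is never hired. Everything else is a one-line application of supermodularity plus induction on $|A'\setminus A|$.
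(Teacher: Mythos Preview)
Your proof is correct and essentially identical to the paper's: both apply supermodularity with $B=\{a\}$ for a single new skill $a\notin A$, use $E[d(\emptyset)]=0$ and $E[d(\{a\})]\ge 0$, and conclude $E[d(A\cup\{a\})]\ge E[d(A)]$. You are simply more explicit than the paper about why $E[d(\emptyset)]=0$ and about chaining for multi-skill additions, but the argument is the same.
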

\begin{proof}
From Theorem \ref{thm:SuperModularity}, $E\left[d\left(A\cup a\right)\right]+E\left[d\left(A\cap a\right)\right]=E\left[d\left(A\cup a\right)\right]\ge E\left[d\left(A\right)\right]+E\left[d\left(a\right)\right]$,
and so $E\left[d\left(A\cup a\right)\right]-E\left[d\left(A\right)\right]\ge E\left[d\left(a\right)\right]\ge0$.
\end{proof}

\section{\label{sec: Value of Model with Multiple Skills} The Importance
of Skill Combinations and Complex Heterogeneity}

The result presented in Theorem \ref{thm:SuperModularity} has some
implications for who we think is important in a collaborative community.
In particular, it indicates that it is important to consider an individual's
\emph{combination }of skills, rather than looking at each of her skills
in isolation. When actors are able to consider a potential collaborator's
skills in combination, the result is much different than one would
see in a model where skills are evaluated individually.

As an illustration, suppose there are two different problem solving
populations, faced with the same problem, $\omega=\left\{ a,b,c\right\} $.
In population A, every individual has exactly one skill, which we
can think of as her speciality or ``type'': $\Psi\left(a\right)=\frac{1}{3},\,\Psi\left(b\right)=\frac{1}{3},\Psi\left(c\right)=\frac{1}{3}$.
In population B, the skills are distributed independently. This means
that and the probability of having skill set $A$ is $\Psi\left(A\right)=\prod_{i\in A}p_{i}$.
Let $p_{i}=Prob\left(have\,skill\,i\right)=\frac{1}{3}$ for $i=a,b,c$.

These two populations have much in common: in both, the three skills
occur in equal proportion, and both populations average one skill
per person. However, as illustrated in Figure \ref{fig:Two populations Network Structure},
the collaboration network is much different in the two different populations.

\begin{figure}[h]
\includegraphics[width=1\columnwidth]{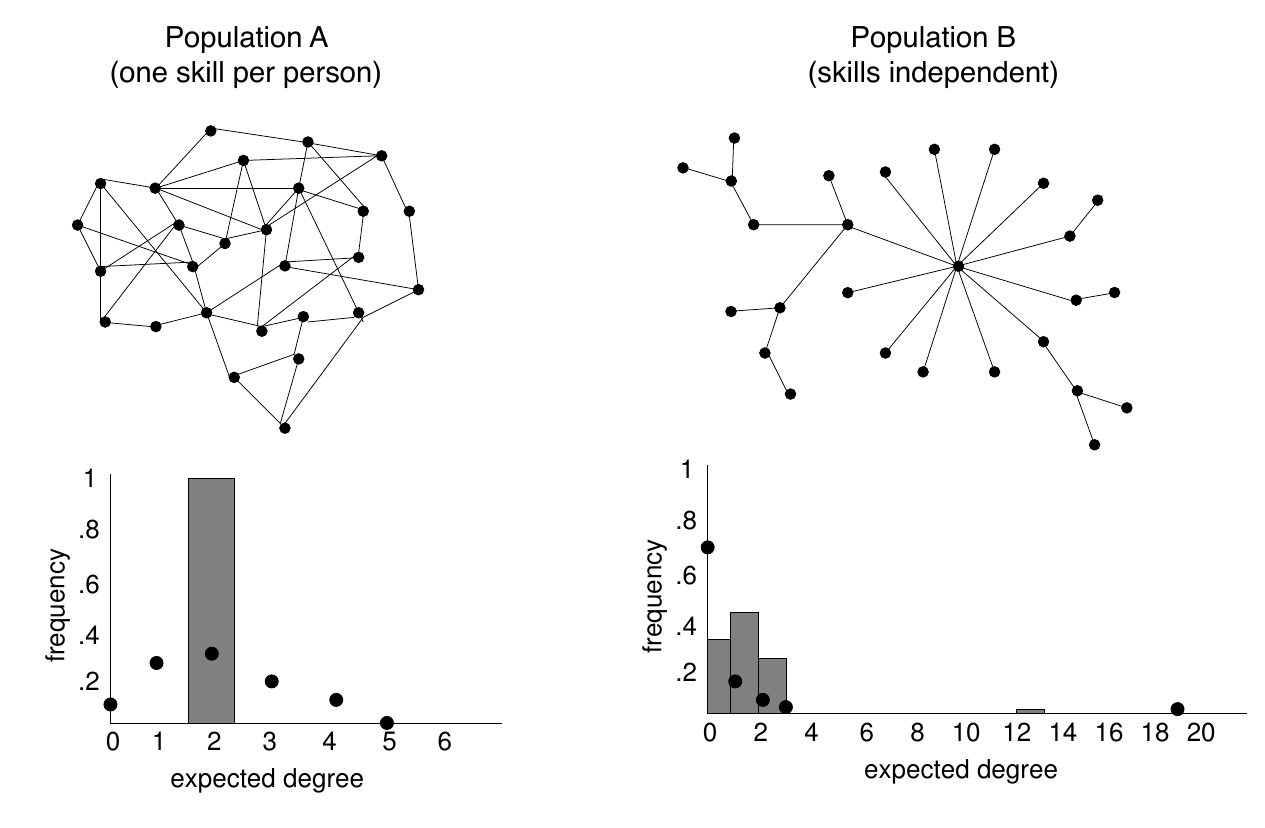}

\caption{\label{fig:Two populations Network Structure}A population with types
and a population with independent skills.}
\end{figure}

This pair of examples highlights the value of a more detailed treatment
of skills in modeling problem solving. Although it is possible to
model the heterogeneity of problem solvers more simply--for example,
via a one-dimensional ability level (\citealp{Gautier2002}, \citealp{Shi2002}),
by giving each individual a type or speciality (\citealp{Hamilton2000}),
or some combination of the two (\citealp{Jovanovic1994}), these modeling
choices are not necessarily benign. 

Allowing for correlations between skills can generate linking behavior
that non-monotonic in the number of skills an individual has--a case
that is both relevant empirically and not present in other models
of collaboration networks. For example, consider the population summarized
in the first panel of Figure \ref{fig:Two-populations}. 
\begin{figure}[h]
\includegraphics[width=1\textwidth]{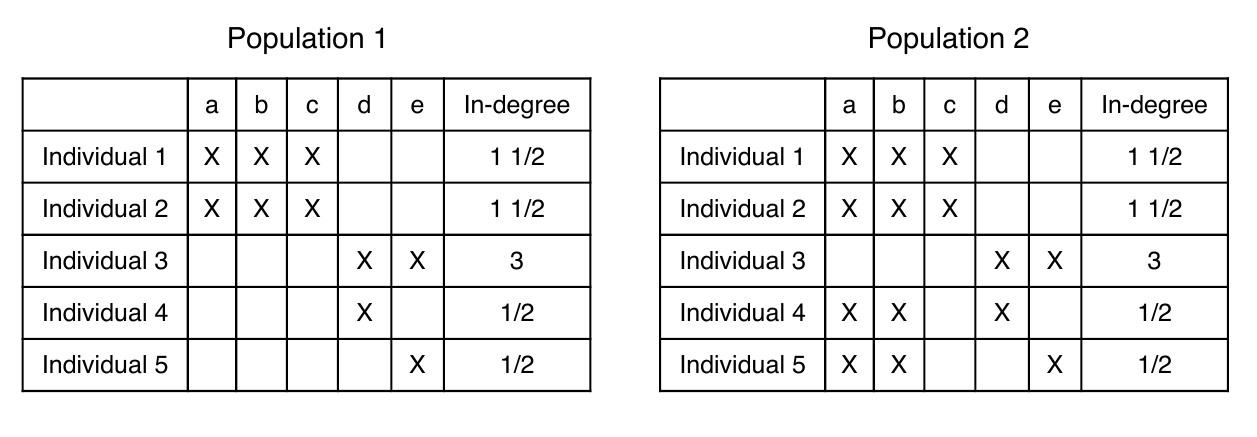}

\caption{\label{fig:Two-populations}Two populations of problem solvers with
five skills $S=\{a,b,c,d,e\}$--$n$ copies of each individual yields
a population of size $N=5n$.}
\end{figure}
 The problem requires five skills, $S=\left\{ a,b,c,d,e\right\} $,
which are distributed across a population of $N=5n$ as indicated.
Each skill is held by exactly $2n$ individuals, and thus no skill
is rarer than the others. Traditionally, we might condense the information
contained in this table into a single-dimensional measure of ability--individuals
1 and 2 have the most skills, and therefore, we might expect them
to have the most links. However, despite having fewer skills, individual
3 will, on average have more links. This is because while neither
of individual 3's skills are rare by themselves, in combination, they
are both rare, and useful to a large fraction of the population. This
means that agent 3 will have more links than a tally of her individual
skills might predict----in other words, the value of a combination\emph{
}of skills may be greater than the sum of its parts. Figure \ref{fig:Third-population}
shows another example. 
\begin{figure}[h]
\includegraphics[width=0.5\textwidth]{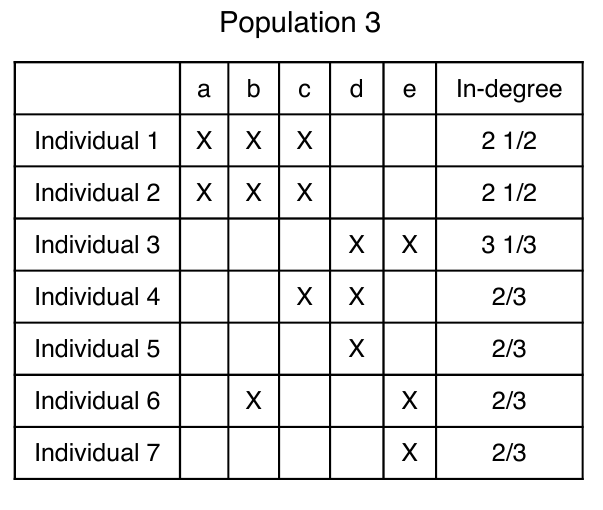}

\caption{\label{fig:Third-population}A third populations of problem solvers
with five skills $S=\{a,b,c,d,e\}$--$n$ copies of each individual
yields a population of size $N=7n$.}
\end{figure}
In this case, skills $d$ and $e$ are more common than skill $a$,
but the individual with skills $d$ and $e$ gets more links, showing
that an individual with common skills might even have more links than
an individual with rarer skills.

Examples of this phenomenon are not difficult to find. For example,
consider a tech entrepreneur who has experience in both computer programming
and marketing. Neither of those skills are rare individually, but
together they are quire rare. Individuals with this combination of
skills will be in high demand, and will have more collaborative connections
than those without that combination of skills. This further suggests
that in problem solving communities such as those in knowledge-based
industries, entrepreneurial firms, and academic research, models in
which individuals are scored on a one-dimensional ability scale will
fail to capture the full effect of variation between individuals. 

Pushing this point a bit further suggests another implication of Theorem
\ref{thm:SuperModularity}--because degree is a supermodular function
of a problem solver's complete set of skills, it is not generically
possible to assign prices to individual skills in a way that captures
her degree on the social network. This means that examining the supply
and demand of single skills in isolation does not necessarily capture
an individual's value to a community of problem solvers.
\begin{cor}
There need exist no vector of prices, $\mu$, such that $\sum_{a\in A}\mu_{a}=d(A)$.
\end{cor}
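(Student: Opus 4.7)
The plan is to derive the corollary directly from the (strict) supermodularity established in Theorem \ref{thm:SuperModularity}: a function of the form $A \mapsto \sum_{a \in A} \mu_a$ is \emph{modular}, so producing any single instance where $d$ is strictly supermodular suffices as a counterexample to the existence of such prices.

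First I would record the modularity observation: for any price vector $\mu \in \mathbb{R}^{|S|}$ and any $A,B \subseteq S$, $\sum_{a \in A\cup B}\mu_a + \sum_{a \in A \cap B}\mu_a = \sum_{a \in A}\mu_a + \sum_{a \in B}\mu_a$. Hence if there existed $\mu$ with $d(A) = \sum_{a \in A}\mu_a$ for every $A \subseteq S$, the function $d$ would necessarily satisfy $d(A \cup B) + d(A \cap B) = d(A) + d(B)$ for all $A, B$.

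Second, I would exhibit one population $\Psi$ for which strict supermodularity holds, thereby contradicting the above identity. The cleanest choice is the running three-skill example with $S = \{a,b,c\}$, uniform $\Psi$ on the six nonempty proper subsets, and $A = \{a\}$, $B = \{b\}$. Inspecting the decomposition used in the proof of Theorem \ref{thm:SuperModularity}, the residual term $\phi = \sum_{C \subseteq A\cup B,\; C\cap A,\,C\cap B\ne\emptyset}\delta(C)/\sigma(C)$ picks up precisely the contribution from $C = \{a,b\}$, which is strictly positive because $\delta(\{a,b\}) = \Psi(\{c\}) > 0$. Therefore $E[d(\{a,b\})] > E[d(\{a\})] + E[d(\{b\})]$, so no choice of $\mu_a, \mu_b$ can match all three expected degrees simultaneously.

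There is no real obstacle here beyond choosing an instance where $\phi > 0$; the supermodularity proof already pinpoints the slack term, so the corollary reduces to observing that this slack is generically nonzero and that modular functions have no slack. I would close by noting that the argument shows more: whenever the underlying problem requires more than one skill and several combinations are possible, $d$ will fail to admit any additive decomposition over individual skills, so the failure of linear skill pricing is a robust feature of the model rather than a pathology of a single example.
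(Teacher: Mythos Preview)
Your proposal is correct and matches the paper's own reasoning. The paper does not supply a formal proof for this corollary; it is stated as an implication of Theorem~\ref{thm:SuperModularity} and illustrated via the three-skill example and Figure~\ref{fig:Two-populations}, which is exactly the route you take---observe that any additive price function is modular, then point to an instance (the running $\{a,b,c\}$ example) where the residual term $\phi$ in the supermodularity decomposition is strictly positive. The only additional content in the paper is the later Theorem~\ref{thm: Bernoulli non-linearity}, which re-proves the same non-existence for the Bernoulli case by a direct computation (fix $\mu_a$ from the singletons, then check failure at $k>1$); your argument via modularity versus strict supermodularity is the more general and conceptually cleaner version of the same idea.
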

To further emphasize this point, consider the skill distribution shown
in the second panel of in Figure \ref{fig:Two-populations}. This
distribution is identical to that in the first panel, except that
individuals 4 and 5 have been given an extra skill. However, gaining
these skills does not influence the degree of either. In fact, endowing
them with those skills does not change any part of the degree distribution.
Skills $a$ and $b$ have value to individuals 1 and 2, but not to
individuals 3 or 4--clearly, no linear weighting of the individual
skills could produce such a pattern. Two characteristics of this model
contribute to this result. First, skills are bundled within a person.
Thus, in evaluating a collaborator, that person's combination of skills
must be considered as a unit. Second, the individuals in the model
have an incentive to minimize the number of collaborators they work
with. Together, these two factors mean that an individual's value
to the collaborative community may be more than the sum of her individual
skills.

These results have implications for empirical models of labor market
outcomes. If we assume that skills contribute to outcomes independently,
then we will underestimate the value of particular combinations of
skills and overestimate the value of other combinations. Moreover,
this non-linear relationship between skills and degree means that
if we observe skills imperfectly, the amount of variation in outcomes
that is explained by an observed set of skills will decline dramatically
in the set of skills we can observe. Additionally, the fact that an
individual's degree depends on both the distribution of skills in
the population and the set of skills she already has suggests that
in collaborative fields, optimal training decisions are highly individualized.\footnote{To see this, consider the Shapley value of an additional skill. When
an each individual requires only one collaborator to solve a problem,
the individual's degree is , (the results are similar for the case
where multiple collaborators may be required). Using degree, $d\left(A_{i}\right)=\sum_{C\subseteq A_{i}}\frac{\Psi\left(S\backslash C\right)}{\sum_{D\subseteq S\backslash C}\Psi\left(C\cup D\right)}$,
as a value function (note that $d\left(.\right)$ satisfies both $d\left(\emptyset\right)=0$
and superadditivity), the Shapely value of a skill, $a$, to individual
$i$: $\phi_{a,i}\left(d\right)=\sum_{B\subseteq A_{i}\backslash\left\{ a\right\} }\frac{1}{\left({\left|A_{i}\right|\atop \left|B\right|}\right)}\left(\sum_{C\subseteq B}\frac{\Psi\left(S\backslash\left(C\cup a\right)\right)}{\sum_{D\subseteq S\backslash\left(C\cup a\right)}\Psi\left(\left(C\cup a\right)\cup D\right)}\right)$,
which depends on both the existing skill population and the individual's
current set of skills. }

\section{\label{sec:Bernoulli Skills Model}The Distribution of Skills and
the Structure of the Collaboration Network: The Bernoulli Skills Model}

In this section, I look at how the distribution of skills in the population
affects the degree distribution of the collaboration network.

\subsection{The Bernoulli Skills Model}

Thus far, I have considered results for a general skill population.
However, when we turn to the effects of skill populations on network
structure, it is difficult to obtain clear predictions using such
a general construction. Therefore, in this section I will consider
a special case, where skills are distributed independently with equal
probability--that is, $Prob\left(a_{i}\in A|a_{j}\in A\right)=Prob\left(a_{i}\in A\right)=p\,\forall\,i\ne j\in S$.
I call this special case \emph{the Bernoulli Skills Model} because
each individual's skill set can be thought of as the result of a set
of $M$ Bernoulli trials, each with probability $p$ of success. This
means that the distribution of skill set sizes in the population is
binomial, implying that the fraction problem solvers who have a \emph{particular
}set of $k$ skills is $\Psi\left(A\right)=p^{k}\left(1-p\right)^{M-k}$,
and the fraction having \emph{any} $k$ skills is $\left({M\atop k}\right)p^{k}\left(1-p\right)^{M-k}$.\footnote{Note that as before, I will be using the \emph{frequency }distribution.
This means that implicitly I will be assuming that the population
is large enough for every skill combination in the distribution to
be represented at least once: $N=p^{-M}$. For small $M$, this assumption
poses no serious problem. However, as $M$ increases, population sizes
become quite large. Thus, the Bernoulli Skills model will provide
more insight in cases where there are fewer relevant skills. } This special case has several characteristics which make it a useful
starting point. First, because skills are uncorrelated and occur with
equal frequency, individuals with the same number of skills will,
in expectation, have the same degree. This means that the only variable
of interest is skill set size, $k$. Second, because this model has
only 2 parameters--$M$ and $p$--it is easy to see how changes in
the distribution of skills in the population affect the structure
of the collaboration network.

\subsection{Degree Distribution of the Bernoulli Skills Model}

Let $\Delta$ denote the distribution of expected degree. That is,
$\Delta\left(d\right)$ is the fraction of the population who have
expected degree $d$, where the expectation is taken over all $g\in\Gamma\left(\Psi\right)$.
In this particular case, I will use a convenient shorthand: $\Delta_{M,p}$
represents the distribution of expected degree when $M$ skills are
independently distributed with probability $p$. 

Theorem \ref{thm:d(A) for independent skills} states the closed form
expression for the degree of an individual in the Bernoulli Skills
Model. 
\begin{thm}
\label{thm:d(A) for independent skills}Suppose each individual in
a population faces a problem, $\omega\left(S\right)$, requiring $M$
skills. If the skills are distributed independently with $Prob\left(a\right)=p\,\forall\,a\in S$,
then the expected degree of an individual with $k$ skills is \textup{
\[
E\left[d(k)\right]=p^{M}\left[\left(\frac{1-p+p^{2}}{p^{2}}\right)^{k}-1\right]
\]
}\end{thm}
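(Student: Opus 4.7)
The plan is to specialize the closed-form expression for expected degree derived in the proof of Theorem~\ref{thm:SuperModularity} to the Bernoulli setting. In the Bernoulli Skills Model, independence implies that the fraction of the population whose skill set contains any specified $C \subseteq S$ is $\sigma(C) = p^{|C|} > 0$; hence in a large population every individual almost surely finds a single collaborator covering all of her missing skills, so her minimal subcover has size exactly one. I can therefore reuse the single-collaborator formula to write the expected in-degree of a problem solver with skill set $A$ of size $k$ as
\[
E[d(A)] = \sum_{\emptyset \ne C \subseteq A} \frac{\Psi(S\setminus C)}{\sigma(C)},
\]
where the empty subset is excluded because individuals with all the skills need no collaborator and so contribute nothing to anyone's in-degree.

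Next I substitute the Bernoulli values. Independence gives $\Psi(S\setminus C) = p^{M-|C|}(1-p)^{|C|}$ and $\sigma(C) = p^{|C|}$, so the summand depends on $C$ only through its cardinality. Grouping the $\binom{k}{j}$ subsets $C \subseteq A$ of size $j$, the expression collapses to
\[
E[d(k)] = \sum_{j=1}^{k} \binom{k}{j}\, p^{M-2j}(1-p)^j = p^M \sum_{j=1}^{k}\binom{k}{j}\left(\frac{1-p}{p^2}\right)^j.
\]
A single application of the binomial theorem, adding and subtracting the $j=0$ term, reduces the remaining sum to $\bigl(1 + (1-p)/p^2\bigr)^k - 1 = \bigl((1-p+p^2)/p^2\bigr)^k - 1$, which yields the stated identity.

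The main obstacle is the initial reduction: I need to justify that in the Bernoulli model the one-collaborator regime from the supermodularity proof indeed applies. This amounts to showing that with probability tending to one, for every individual $j$ some other individual already possesses $S \setminus A_j$; since the chance that \emph{no} one has a prescribed set $C$ of skills is $(1-p^{|C|})^{N-1}$, which decays exponentially in $N$ for any $p>0$, the contribution of multi-collaborator covers to expected degree is negligible in the large-$N$ regime in which $\Psi$ realizes the Bernoulli distribution. Once this reduction is granted, the rest is the mechanical binomial-theorem computation sketched above.
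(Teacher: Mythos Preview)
Your proposal is correct and follows essentially the same route as the paper: reduce to the single-collaborator formula $E[d(A)]=\sum_{\emptyset\neq C\subseteq A}\delta(C)/\sigma(C)$, substitute the Bernoulli values $\delta(C)=p^{M-|C|}(1-p)^{|C|}$ and $\sigma(C)=p^{|C|}$, group by $|C|$, and apply the binomial theorem. The only cosmetic difference is that the paper justifies the one-collaborator regime by noting $\Sigma(\Psi)=2^{S}$ (so someone in the population has every missing set exactly), whereas you phrase it as a large-$N$ argument; the computation thereafter is identical.
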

\begin{proof}
Since $\Sigma\left(\Psi\right)=2^{S}$, every individual needs to
make only one link. Thus, we can write $E\left[d\left(A\right)\right]=\sum_{C\subseteq A}\frac{\delta\left(C\right)}{\sigma\left(C\right)}$,
where $\delta\left(C\right)$ is the fraction of the population who
need skill set $C$ and $\sigma\left(C\right)$ is the fraction of
the population who can provide skill set $C$. Since the skills are
independent, we can separate this sum according to the size of the
skill set required. If we start with the individuals lacking exactly
one skill in $A_{i}$ and end with the individuals needing all of
the skills, we obtain the following sum: 
\begin{eqnarray*}
E\left[d\left(A\right)\right] & = & \sum_{i\in A}\frac{p^{M-1}\left(1-p\right)}{p}+\sum_{i,j\in A}\frac{p^{M-2}\left(1-p\right)}{p^{2}}+...+\frac{p^{M-k}\left(1-p\right)^{k}}{p^{k}}\\
 & = & \sum_{i=1}^{k}\left({k\atop i}\right)\frac{p^{M-i}\left(1-p\right)^{i}}{p^{i}}\\
 & = & p^{M}\left[\left(\frac{1-p+p^{2}}{p^{2}}\right)^{k}-1\right]
\end{eqnarray*}

\end{proof}
Note that when skills are independent, an individual's degree on the
network depends only on the size of her skill set, $k$. Therefore,
in the Bernoulli Skills Model, it is appropriate to interpret the
size of an individual's skill set as her ``ability''--something
that we cannot do in the more general case (recall Figure \ref{fig:Two-populations}
in the previous section). Moreover, in the Bernoulli Skills Model,
degree is a superadditive function of the number of skills a person
has: $E\left[d\left(k+1\right)\right]>E\left[d\left(k\right)\right]+E\left[d\text{\ensuremath{\left(1\right)}}\right]$.
This suggests a corollary to Theorem \ref{thm:d(A) for independent skills}.
\begin{cor}
Suppose each individual in a population faces a problem, $\omega\left(S\right)$,
requiring $M$ skills. If the skills are distributed independently
with $Prob\left(a\right)=p\,\forall\,a\in S$, then degree is a superadditive
function of the size of her skill set, $k=\left|A_{i}\right|$. Moreover,
her degree is strictly increasing in $k$. 
\end{cor}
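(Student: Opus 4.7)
The plan is to reduce both claims to elementary properties of the closed form provided by Theorem \ref{thm:d(A) for independent skills}. Write $E[d(k)] = p^M(q^k - 1)$ where $q := (1-p+p^2)/p^2$. The first thing I would do is record the key observation that $q > 1$ whenever $p \in (0,1)$: indeed, $1 - p + p^2 - p^2 = 1 - p > 0$, so $q - 1 = (1-p)/p^2 > 0$. (The degenerate endpoints $p = 0$ or $p = 1$ trivially make the network empty or complete and can be dispatched in one line.)

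Strict monotonicity then follows instantly: the forward difference is
\[
E[d(k+1)] - E[d(k)] = p^M\bigl(q^{k+1} - q^k\bigr) = p^M q^k (q-1) > 0,
\]
so $E[d(k)]$ is strictly increasing in $k$.

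For superadditivity, I would prove the slightly stronger statement $E[d(k_1 + k_2)] \ge E[d(k_1)] + E[d(k_2)]$ for all nonnegative integers $k_1, k_2$, which specializes to the inequality in the corollary when $k_2 = 1$. Using the closed form,
\[
E[d(k_1+k_2)] - E[d(k_1)] - E[d(k_2)] = p^M\bigl(q^{k_1}q^{k_2} - q^{k_1} - q^{k_2} + 1\bigr) = p^M (q^{k_1}-1)(q^{k_2}-1).
\]
Because $q > 1$, each factor $q^{k_i} - 1$ is nonnegative, and strictly positive whenever $k_i \ge 1$, which gives superadditivity (with strict inequality whenever both $k_1, k_2 \ge 1$).

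Honestly, there is no serious obstacle here: once the closed form is in hand, everything reduces to the factorization $(q^{k_1}-1)(q^{k_2}-1) \ge 0$. The only thing worth being careful about is remembering that the corollary is about the individual's \emph{skill set size}, so the statement is really about the single-variable function $k \mapsto p^M(q^k - 1)$, and that its superadditivity is inherited from the convexity/multiplicativity of $q^k$ rather than from any deeper combinatorial property.
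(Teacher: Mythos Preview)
Your proof is correct and follows exactly the approach the paper intends: the corollary is stated without proof, as an immediate consequence of the closed form in Theorem~\ref{thm:d(A) for independent skills}, and your argument makes that immediacy explicit via the substitution $q=(1-p+p^{2})/p^{2}>1$ and the clean factorization $p^{M}(q^{k_{1}}-1)(q^{k_{2}}-1)\ge 0$. There is nothing to add; the paper's omitted proof would necessarily look like yours.
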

However, we still cannot price the skills individually in such a way
that we capture degree, despite the fact that skills are independently
distributed.
\begin{thm}
\label{thm: Bernoulli non-linearity}Suppose each individual in a
population faces a problem, $\omega\left(S\right)$, requiring $M$
skills . If the skills are distributed independently with $Prob\left(a\right)=p\,\forall\,a\in S$,
then there exists no vector of prices, $\mu$, such that $\sum_{a\in A}\mu_{a}=E\left[d\left(A\right)\right]$
for all $A\subseteq S$.\end{thm}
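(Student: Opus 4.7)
The plan is to exploit the symmetry of the Bernoulli Skills Model to collapse the hypothetical vector of prices to a single scalar, and then exhibit a concrete pair of set sizes at which a linear function of $|A|$ cannot match the closed-form expression $E[d(A)] = p^{M}\bigl[q^{|A|}-1\bigr]$ from Theorem \ref{thm:d(A) for independent skills}, where $q = (1-p+p^{2})/p^{2}$.

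First, I would apply the pricing constraint to singletons: taking $A=\{a\}$ forces $\mu_{a} = E[d(\{a\})]$. Since skills are i.i.d.\ with common probability $p$, every singleton has the same expected degree, so $\mu_{a}=\mu_{b}=:\mu$ for all $a,b\in S$. Consequently, any hypothetical price vector must give $\sum_{a\in A}\mu_{a} = \mu|A|$, a function that is \emph{linear} in the skill-set size $k = |A|$.

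Second, I would note that Theorem \ref{thm:d(A) for independent skills} already delivers the contradiction: $E[d(k)]$ grows like $q^{k}$, which is strictly convex (in fact exponential) in $k$ whenever $p \in (0,1)$, because $q = (1-p+p^{2})/p^{2} > 1$ for such $p$. A linear function of $k$ cannot equal an exponential one except at isolated points, so for $M \geq 2$ it is impossible to satisfy $\mu k = p^{M}(q^{k}-1)$ simultaneously at $k=1$ and $k=2$. To make this fully explicit, I would compute
\[
E[d(1)] = p^{M-2}(1-p), \qquad E[d(2)] = p^{M-4}(1-p)(1-p+2p^{2}),
\]
using the factorization $(1-p+p^{2})^{2}-p^{4} = (1-p)(1-p+2p^{2})$, and observe that $E[d(2)] = 2\,E[d(1)]$ would reduce to $1-p+2p^{2} = 2p^{2}$, i.e., $p=1$, contradicting $p \in (0,1)$.

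I do not anticipate a real obstacle: the only subtlety is justifying why all $\mu_{a}$ must be equal, and the singleton reduction above handles that in a single line. The remaining work is the routine algebraic check that the linear and exponential expressions disagree at $k=2$, which can be packaged as the concrete equation $E[d(2)] \neq 2\,E[d(1)]$ for $p<1$. The trivial edge cases $p=1$ (everyone has every skill, so $E[d(A)]=0$ trivially admits $\mu\equiv 0$) and $M\leq 1$ (where the statement is vacuous) should be flagged and excluded in the hypothesis or handled as a one-line remark.
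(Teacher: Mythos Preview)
Your proposal is correct and follows essentially the same argument as the paper: use the singleton constraint to force $\mu_{a}=E[d(\{a\})]=p^{M-2}(1-p)$ for every $a$, then observe that the resulting linear function $k\,p^{M-2}(1-p)$ cannot match the exponential closed form for $k>1$. You simply make the paper's ``clearly not true for $k>1$'' explicit by carrying out the $k=2$ comparison and flagging the degenerate cases $p=1$ and $M\le 1$, which is a welcome sharpening rather than a different route.
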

\begin{proof}
Any such vector would be required to set $\mu_{a}=d\left(a\right)=p^{M-2}(1-p)$
for all $a\in S$. But that would imply that $d\left(A\right)=kp^{M-2}(1-p)$
for $\left|A\right|=k$. This is clearly not true for $k>1$.
\end{proof}
This superadditive relationship between skills and degree has a dramatic
affect on the distribution of links in the population. The distribution
of degree in the network is 
\[
\Delta_{M,p}\left(d\right)=\frac{1}{\left(1-p\right)^{M}}\left({M\atop k\left(M,p,d\right)}\right)\left(\frac{p}{1-p}\right)^{k\left(M,p,d\right)}
\]

where $k\left(M,p,d\right)=\frac{\ln\left(\frac{d}{p^{M}+1}\right)}{\ln\left(\frac{1-p+p^{2}}{p^{2}}\right)}$.
Note that although the distribution of ``ability'' (skill set size)
in the Bernoulli Skills Model is binomial, and thus symmetric, the
distribution of degree is highly skewed. Individuals with a few extra
skills will have many additional combinations of skills, so individuals
with marginally larger skill sets will have dramatically more links.
As a result, the distribution of degree has much higher variance than
the distribution of ability. Figure \ref{fig: Skills and Degree Dist}
shows an example of ability and degree distributions with $M=10$
and $p=\frac{1}{2}$.

\begin{figure}[h]
\includegraphics[width=1\textwidth]{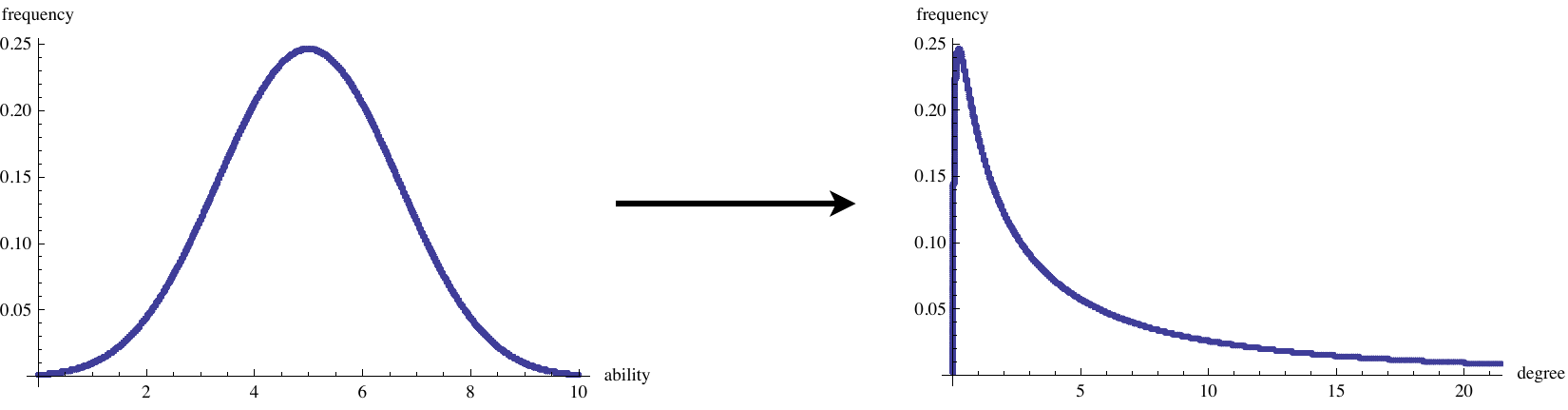}

\caption{\label{fig: Skills and Degree Dist}Distributions of ability and degree
in a Bernoulli Skills Network with $M=10$ and $p=\frac{1}{2}$. }
\end{figure}

The resulting network has the kind of long-tailed degree distribution
we expect to see in a collaboration network. Interestingly, this long-tailed
distribution emerges, despite the lack of a preferential attachment
dynamic. If such a dynamic were added on top of this model, the degree
distribution would become even more skewed.

This transformation of the ability distribution has an empirical implication--individuals
who have a disproportionate number of links in a collaboration network
may not have a disproportionate number of skills. In particular, individuals
who are only slightly more skillful may be dramatically more important
in the collaborative community. This means that collaborative communities
may have ``superstars''--individuals with a disproportionate influence
in the community--even when skills are distributed relatively evenly
in the population--while great innovators like Thomas Edison are undoubtedly
more skilled than the average entrepreneur, they are likely not 1000
times as skilled, as their levels of productive output might suggest.
Moreover, modeling individuals as having skill sets may account for
some unexplained variation in collaborative outcome\textbf{s.}

\subsection{Problem Difficulty and the Distribution of Degree: Comparative Statics
on the Bernoulli Skills Model}

While most collaboration networks share the skewed degree distribution,
there is some variation between them. In this section, I use comparative
statics of the Bernoulli skills model to examine how changes in the
distribution of skills change the amount of inequality in the distribution
of links. 

In the following, I will use the Gini coefficient as my measure of
distributional equality (See Appendix C for a discussion of the gini
coefficient in this case). Using the fact that $\Delta_{M,p}\left(d\right)=\left({M\atop k\left(M,p,d\right)}\right)p^{k\left(M,p,d\right)}\left(1-p\right)^{M-k\left(M,p,d\right)}$
for $k\left(M,p,d\right)=\frac{\ln\left(\frac{d}{p^{M}}-1\right)}{\ln\left(\frac{1-p+p^{2}}{p^{2}}\right)}$
the gini coefficient of the degree distribution $\Delta_{M,p}$ is
\[
G\left(M,p\right)=1-\frac{\left(1-p\right)^{2M}}{\left(1-p^{M}\right)}\left[\sum_{k=0}^{M}\left(\left({M\atop k}\right)\left(\frac{p}{1-p}\right)^{k}*\sum_{j=0}^{k}\left(\left({M\atop j}\right)\left(\frac{p}{1-p}\right)^{j}*d\left(M,p,k\right)\right)\right)\right]
\]
As is conventional, values of $G\left(p,M\right)$ closer to 1 indicate
a more unequal distribution of degree, while values closer to 0 indicate
a more equitable distribution of degree.

Recall that the Bernoulli Skills model has only two parameters: the
number of skills required to solve the problem ($M$) and the fraction
of those skills possessed by the average individual ($p$). Figure
\ref{fig:Example Networks M,p-1} shows the Gini coefficient for various
values of $M$ and Figure \ref{fig:Gini for p and M} shows it for
various values of $p$. The Gini coefficient is increasing in $M$,
meaning that as problems require more skills, the distribution of
degree becomes increasingly skewed towards a few high-degree nodes.
The Gini coefficient is decreasing in $p$, meaning that as the probability
of having a given skill goes down, the network becomes more skewed
(see Figure \ref{fig:Gini for p and M}). These results are summarized
in Theorem \ref{thm:Comparative Static M p}.
\begin{thm}
\label{thm:Comparative Static M p}Suppose each individual in a population
faces a problem, $\omega\left(S\right)$, requiring $M$ skills and
further suppose the skills are distributed independently with $Prob\left(a\right)=p\,\forall\,a\in S$.
Let $G\left(p,M\right)$ be the Gini Coefficient for the resulting
degree distribution, $\Delta_{M,p}$. Then
\begin{enumerate}
\item $G\left(p,M\right)$ is strictly increasing in $M$. That is, the
distribution of links in the collaboration network is more uneven
when the problem being solved requires more skills. 
\item As $G\left(p,M\right)$ is strictly decreasing in $p$. That is, the
distribution of links in the collaboration network is more uneven
as the probability of having each skill gets lower.
\end{enumerate}
\end{thm}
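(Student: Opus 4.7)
My plan is to verify both claims by direct calculus on the closed-form expression for $G(p,M)$. Writing $q := 1-p+p^2$ (so that the ratio in the degree formula is $r = q/p^2 > 1$ for $p\in(0,1)$, and $q - p^2 = 1 - p > 0$ always), I would multiply the numerator and denominator of $1-G$ by $p^M$ to obtain the convenient form
\[
1 - G(p,M) \;=\; \frac{N(p,M)}{D(p,M)} \;=\; \frac{p^M(1-p^M)(1-p)}{q^{M+1} - p^{2M}\bigl(q + M(1-p)\bigr)}.
\]
This arrangement isolates the dominant exponential rates: $q^{M+1}$ grows with base $q$, and the correction $p^{2M}(\cdots)$ grows with base $p^2 < q$.

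For Part 1, I would show $1 - G$ is strictly decreasing in $M$ (treating $M$ as continuous) by verifying the log-derivative inequality $\partial_M \log D > \partial_M \log N$. Direct computation gives
\[
\partial_M \log N \;=\; \log p \cdot \frac{1 - 2p^M}{1-p^M}, \qquad \partial_M \log D \;=\; \frac{q^{M+1}\log q - p^{2M}\bigl[2\log p\,(q + M(1-p)) + (1-p)\bigr]}{D}.
\]
In the asymptotic regime $M \to \infty$ with $p$ fixed, $p^M \to 0$, so $\partial_M \log N \to \log p$ and $\partial_M \log D \to \log q$; the inequality holds with room to spare because $q > p$ (equivalent to $(1-p)^2 > 0$). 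For finite $M$, clearing the positive denominator $D$ reduces the inequality to a polynomial-in-$p$ statement whose sign is controlled by $q - p^2 = 1 - p > 0$, which I would verify term by term. The base case $M = 1$ is an elementary check.

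For Part 2, I Taylor expand around $p = 0$. Using $q^{M+1} = 1 - (M+1)p + O(p^2)$ and $p^{2M}(q + M(1-p)) = O(p^{2M})$, we obtain $D(p,M) = 1 - (M+1)p + O(p^2 + p^{2M})$ and $N(p,M) = p^M - p^{M+1} + O(p^{2M})$, so
\[
1 - G(p,M) \;=\; p^M\bigl(1 + Mp + O(p^2)\bigr) \qquad \text{as } p \to 0^+.
\]
Differentiating in $p$ gives $\partial_p(1 - G) = M p^{M-1} + O(p^M) > 0$ for all sufficiently small $p > 0$, so $\partial_p G(p,M) < 0$ on a right neighborhood of $0$, proving Part 2.

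The main obstacle is Part 1: verifying the log-derivative inequality uniformly in $M$ and $p$. The asymptotic regimes $M \to \infty$ and $p \to 0$ are clean, but the intermediate range requires careful bookkeeping of the $p^{2M}$ correction terms, which are not negligible for small $M$ and $p$ near $1$. As a backup, the discrete inequality $N(p,M)\,D(p,M+1) > N(p,M+1)\,D(p,M)$ can be verified by induction on $M$ after expanding both products as polynomials in $p$ with coefficients in $q$, which sidesteps the continuous calculus entirely. Part 2 is essentially a routine asymptotic calculation once the leading-order Taylor expansion is in hand.
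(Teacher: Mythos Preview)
Your Part 2 argument is correct and in fact cleaner than the paper's: the paper establishes the stronger claim that $G$ is decreasing in $p$ on an explicit interval $p<0.88$ by writing down a boundary condition and invoking numerical analysis, whereas your Taylor expansion $1-G(p,M)=p^M(1+Mp+O(p^2))$ gives the stated ``as $p\to 0$'' conclusion directly and rigorously.

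Your Part 1 plan, however, has a genuine gap. Once you take the continuous derivative in $M$, the quantities $\partial_M\log N$ and $\partial_M\log D$ both contain $\log p$ and $\log q$ (via $\partial_M p^M = p^M\log p$, etc.). Clearing the denominator $D$ therefore does \emph{not} produce a ``polynomial-in-$p$ statement'' as you assert; you are left with a transcendental inequality comparing combinations of $\log p$ and $\log q$ against algebraic expressions, and the claim that its sign is ``controlled by $q-p^2=1-p$'' is not substantiated. The asymptotic check $\log q>\log p$ as $M\to\infty$ is fine but says nothing about finite $M$, and you yourself flag this intermediate range as the main obstacle without resolving it.

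Your backup---the discrete inequality $N(p,M)\,D(p,M+1)>N(p,M+1)\,D(p,M)$---is exactly the route the paper takes: it computes $G(p,M+1)-G(p,M)$ as a single rational expression and reduces positivity of the numerator through a short chain of elementary inequalities (using $(1-p)>(1-p-p^M)$ and $1-p+p^2>p$). This avoids logarithms entirely and is genuinely a polynomial comparison for each fixed $M$. You should commit to this discrete route from the start rather than treating it as a fallback; the continuous log-derivative approach, as written, does not close.
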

\begin{figure}[h]
\includegraphics[width=0.9\columnwidth]{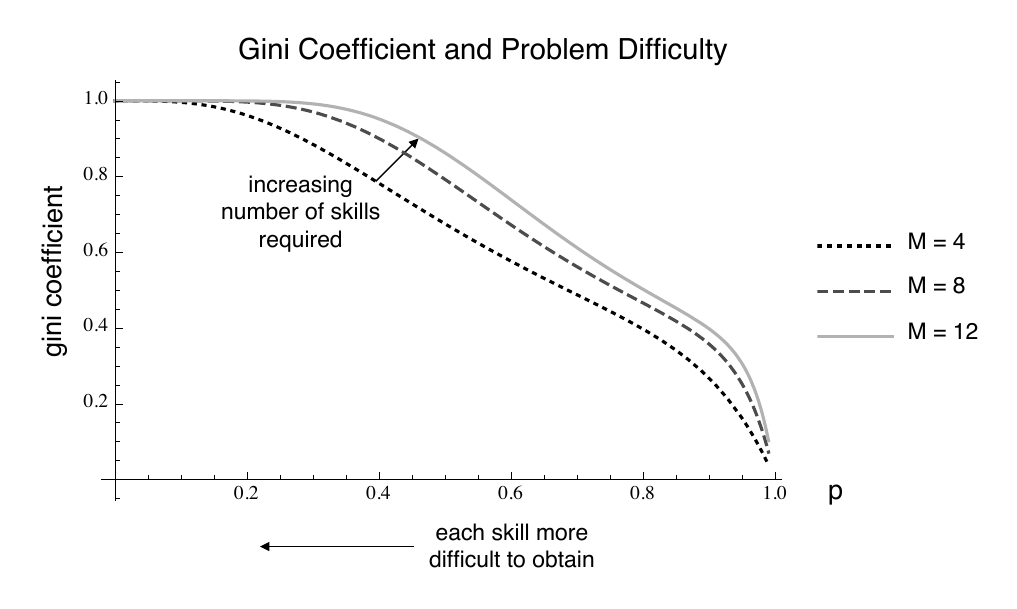}

\caption{\label{fig:Gini for p and M}The gini coefficient, $G\left(p,M\right)$,
for different values of $M$ and $p$.}
\end{figure}

Together, these results can be interpreted in terms of the difficulty
of the problem being faced. A problem is more difficult if it requires
many skills ($M\uparrow$), or if the average individual has only
a few of them ($p\downarrow$). This means that the above can be recast
as a comparative static on problem difficulty. As problems become
increasingly difficult ($M\uparrow$ and $p\downarrow$), the degree
distribution of the network becomes more skewed, and a small number
of ``superstars'' will dominate the collaborative community. Figures
\ref{fig:Example Networks M,p-1} and \ref{fig:Example Networks M,p}
illustrate the changes in the network as these parameters shift.

\begin{figure}[h]
\includegraphics[width=0.9\textwidth]{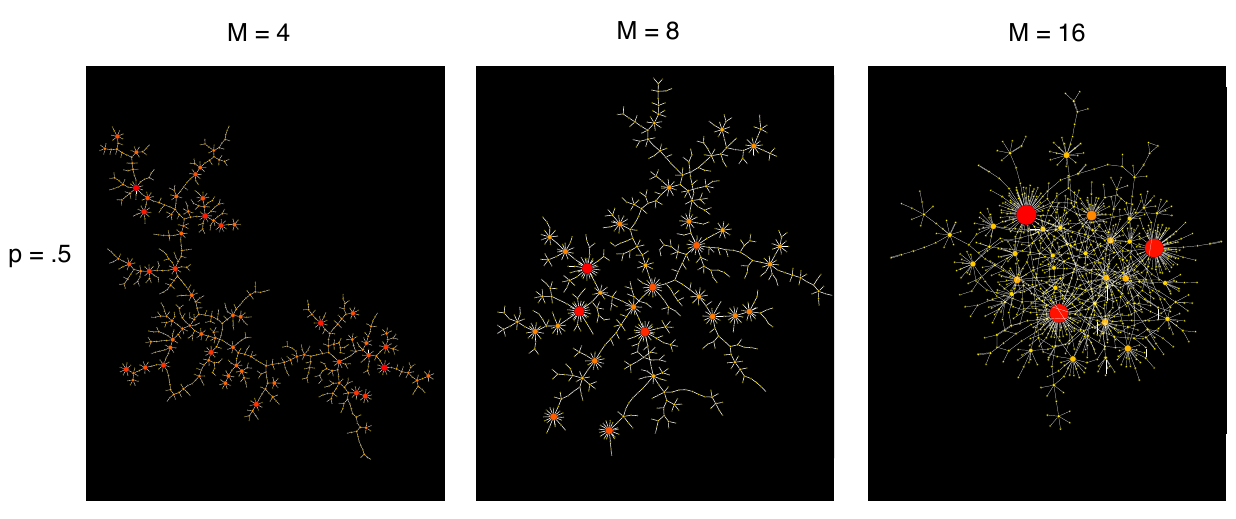}

\caption{\label{fig:Example Networks M,p-1}Examples of Bernoulli Skills Networks
with $N=1000$}
\end{figure}

\begin{figure}[h]
\includegraphics[width=0.9\textwidth]{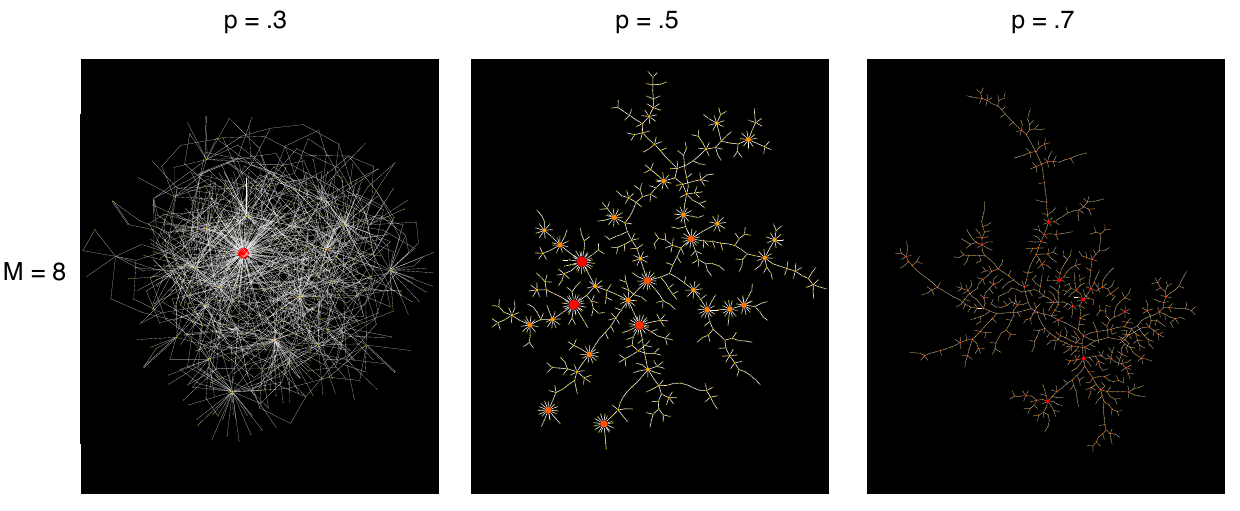}

\caption{\label{fig:Example Networks M,p}Examples of Bernoulli Skills Networks
with $N=1000$}
\end{figure}

\section{\label{sec:Skill-Ladders}Skill Ladders: Skill Hierarchy and Degree}

In the previous section, I considered a special case in which skills
are entirely uncorrelated. In this section, I consider the effect
of correlations between skills. I divide the skills into multiple
skill ladders, where the skills within a ladder build on one another.
I show that when skills are correlated in this way, the degree distribution
of the collaboration network becomes even more unequal than when skills
are uncorrelated. This suggests that when skills in a community are
structured in this way, a very small number of individuals will tend
to dominate the collaboration network.

\subsection{The Ladder Model}

In order to consider the effect of correlations between skills, in
this section, I will need a second special case, which I will call
the Ladder Model. I will define a \emph{ladder }to be an ordered set
of skills, $L=\left\{ a_{1},a_{2},a_{3}...a_{l}\right\} \subseteq S$,
such that $Prob\left(\mbox{have }a_{i}|\mbox{have }a_{i+1}\right)=1$
(that is, such that any individual who has the $i^{th}$ skill in
the set must have all of the skills that precede it in the set).\footnote{\citealp{Page2007} introduces this concept of skill ladders.}
Here, I consider a special case where the skills in $S$ are partitioned
into $m$ ladders of equal length.\footnote{Obviously, since the length of a ladder is an integer, there will
only be equal-length ladders if $m$ divides $M$ evenly. To simplify
the exposition, I have written the following as if this is true. However,
all of the the following results hold if the ladders are equal length
up to integer constraints, which allows for cases where $m$ does
not divide $M$ evenly.} The set of ladders is denoted $\hat{S}=\left\{ L_{1}...L_{m}\right\} $.
Figure \ref{fig:FourLadders} shows an example with 12 skills arranged
into four ladders.

\begin{figure}[H]
\includegraphics[clip,scale=0.5]{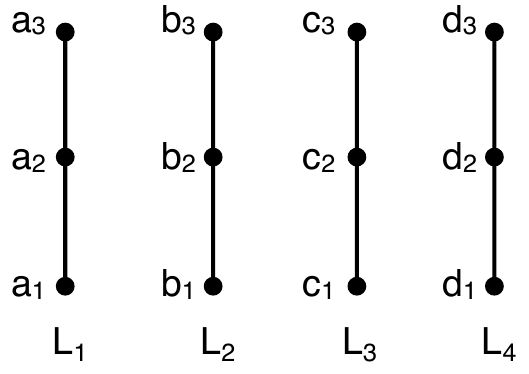}

\caption{\label{fig:FourLadders}12 skills arranged into four ladders of equal
length.}
\end{figure}

I will call an individual who has all of the skills in a single ladder
an ``expert'' in that ladder, and I will call the set of ladders
that individual $i$ is an expert in $\hat{A}_{i}\subseteq\hat{S}$. 

One additional assumption will allow me to compare the results in
this section to the results of the Bernoulli skills model. I will
assume that the conditional probability of having the next skill in
a ladder is the same for all skills--that is, $Prob\left(\mbox{have }a_{i}|\mbox{ have }a_{i-1}\right)=p$
for all $a_{i}$.\footnote{In other words, I assume that putting a skill at the end of a ladder
doesn't change the essential difficulty of obtaining that skill. If
one instead makes a skill at the top easier to obtain (eg: because
it builds on previous experience) or harder to obtain (eg: because
they are more demanding than the skills that came before), then the
results remain qualitatively unchanged.} The probability of being an expert in a ladder of $l$ skills is
then $p^{l}$. Thanks to this assumption, the case where $m=M$ corresponds
to the Bernoulli skills model. On the other extreme, $m=1$, and all
of the skills are arranged in a single ladder. Before considering
ladders of arbitrary length, I will first look at this case, where
$m=1$.

\subsection{\label{exa:A-single-ladder}Example: a single ladder of skills}

Suppose the skills in the set $S$ comprise a single ladder of length
$M$. Because $Prob\left(\mbox{have }a_{i}|\mbox{have }a_{i+1}\right)=1$
$\forall i\in S$, an individual's skill set can be represented by
the number of skills she has ($\left|A_{i}\right|=k$ implies $A_{i}=\left\{ a_{1},a_{2}...a_{k}\right\} $)
. Because individuals can be ranked in order of the number of skills
they have, this case corresponds to a model in which each individual
has a one-dimensional ability measure.

The linking behavior in this case is very simple. The only individuals
who have skill $a_{M}$ are those who also have skills $a_{1}...a_{M-1}$.
Anyone who doesn't have all $M$ skills links to someone who does.
The resulting collaboration network is a set of isolated stars, each
with $\frac{1-p^{M}}{p^{M}}$ links, on average. Figure \ref{fig:Compare Network with Ladder}
compares the network structure in the case with one skill ladder $\left(m=1\right)$
to the network structure in the Bernoulli skills model, where skills
are independent $\left(m=M\right)$. The populations that make up
these two networks are similar--they have the same number of problem
solvers, the same number of skills, and the same probability of having
an additional skill. This means that the probability of an individual
having all of the skills required to solve the problem is the same
in both networks. Moreover, in both, exactly one individual has all
of the skills required. However, the two networks have a much different
structure. When all of the skills build on each other, the network
is centered around a single, high-degree node. When skills are independent,
the network structure is much more distributed.\footnote{Note that this could be interpreted as a prediction about organizational
structures in different industries. In industries where value is created
through the exploitation of existing knowledge (eg: manufacturing),
skills tend to build on one another, and it is viable to rank workers
by ability. However, as demand for workers shifts towards industries
that create value by generating new knowledge, we expect skills to
be arranged in ladders less often. This model predicts a corresponding
shift from hierarchical organizational structures towards more distributed
organizational structures. Empirical evidence seems to support that
prediction. Traditional organizational structures were hierarchical
(the theoretical underpinnings are explored in \citealp{Rosen1982}).
However, evidence indicates that organizational structure within firms
is changing--hierarchical structures are flattening, and workplaces
are becoming more decentralized (see, for example, \citealp{Bresnahan2002}
and \citealp{Rajan2006}). }

\begin{figure}[h]
\includegraphics[width=0.8\columnwidth]{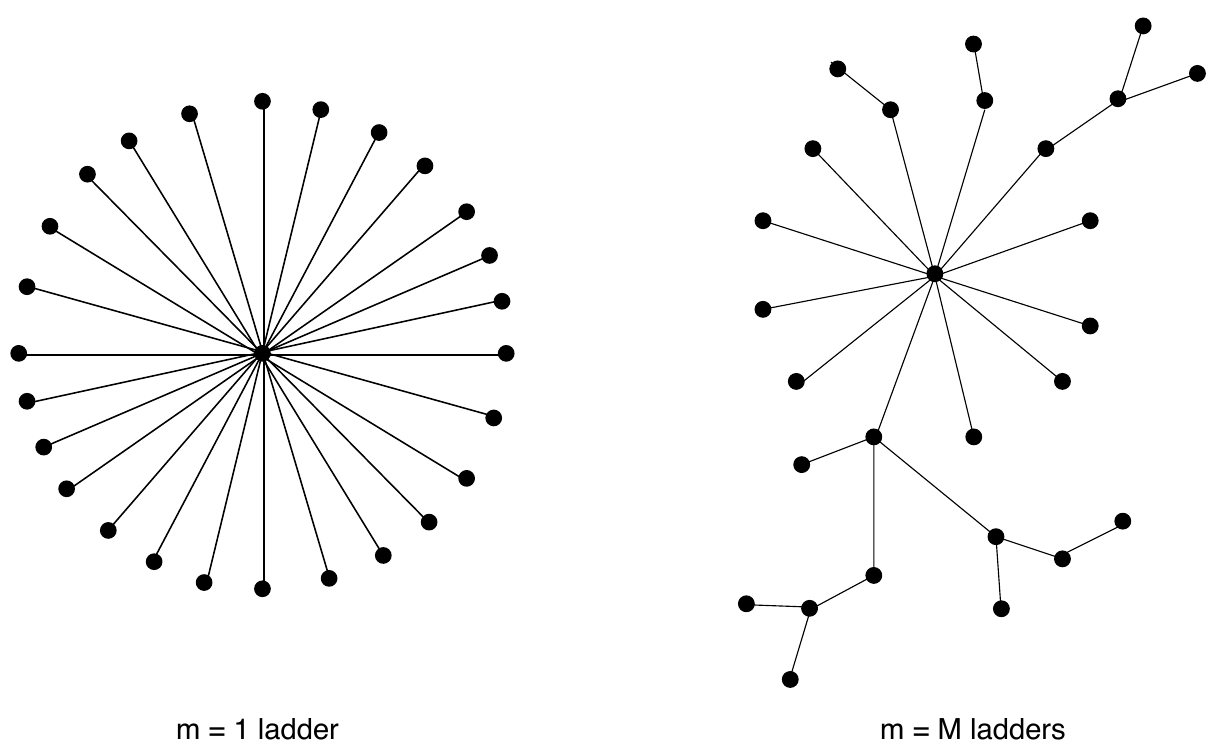}

\caption{\label{fig:Compare Network with Ladder}Ladder Skills Networks with
$N=27$ and $M=3$. }
\end{figure}

\subsection{Results for $m$ ladders}

Now consider the more general result. Suppose the skills are arranged
in $m$ equal-length ladders. As in the previous example, only experts
obtain links. Thus, a model with $m$ ladders reduces to a Bernoulli
skills model with $m$ independent skills. Theorem \ref{thm:degree with m ladders}
presents a closed-form expression for a individual problem solver's
degree in the case with $m$ skill ladders.
\begin{thm}
\label{thm:degree with m ladders}If $\Psi$ is a distribution of
skills such that $\hat{S}=\left\{ L_{1}...L_{m}\right\} $ is a partition
of $S$ into $m$ equal-length ladders with $Prob\left(\mbox{have }a_{i}^{j}|\mbox{ have }a_{\left(i-1\right)}^{j}\right)=p$,
then an individual with the skill set $A$ will have expected degree
$E\left[d(A)\right]=p^{\frac{M}{m}}\left[\left(\frac{1-p^{\frac{M}{m}}+p^{2\frac{M}{m}}}{p^{2\frac{M}{m}}}\right)^{k}-1\right]$
, where $k$ is the number of disciplines the individual is an expert
in.\end{thm}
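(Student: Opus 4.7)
The plan is to reduce the ladder model to the Bernoulli Skills Model of Theorem~\ref{thm:d(A) for independent skills} by collapsing each ladder into a single compound ``meta-skill''. For each $j$, let $\hat{a}_j$ denote the property of being an expert in $L_j$, so that $\hat{A}_i \subseteq \{\hat{a}_1, \ldots, \hat{a}_m\}$ is $i$'s meta-skill set. Because each ladder has length $M/m$ and $\mathrm{Prob}(a_i^j \mid a_{i-1}^j) = p$, we have $\mathrm{Prob}(\hat{a}_j) = q := p^{M/m}$; and since the ladders partition $S$ and are independent of one another, the meta-skills $\hat{a}_1, \ldots, \hat{a}_m$ are i.i.d. Bernoulli trials with parameter $q$ across the population.

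The crux is to argue that the ladder collaboration network is isomorphic, link for link, to the Bernoulli collaboration network on these $m$ meta-skills. Fix $i$ and a ladder $L_j$ in which $i$ is not an expert, and let $a_t^j$ be the highest-indexed skill of $L_j$ that $i$ lacks. Any collaborator who supplies $a_t^j$ must possess $a_t^j$, and by the ladder property that collaborator also possesses $a_{t+1}^j, \ldots, a_{M/m}^j$, so such a collaborator is necessarily an expert in $L_j$; conversely, any expert in $L_j$ single-handedly covers every skill of $L_j$ that $i$ lacks. Hence every minimal subcover $C_i \in \mathbb{C}_i$ in the ladder model consists exclusively of experts, and choosing $C_i$ to cover $A_i^c$ reduces to choosing a minimal subcover of the missing meta-skills $\{\hat{a}_1, \ldots, \hat{a}_m\} \setminus \hat{A}_i$ using collaborators whose meta-skill sets contain the missing entries. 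Within each missing ladder the available experts are perfectly interchangeable as coverers, so the uniform tie-breaking rule on $\mathbb{C}_i$ maps faithfully onto the Bernoulli model's tie-breaking, and expected degree is preserved under the reduction.

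With the reduction in hand, the formula follows by invoking Theorem~\ref{thm:d(A) for independent skills} applied to the $m$ independent meta-skills, substituting $M \mapsto m$ and $p \mapsto q = p^{M/m}$ and evaluating at $k = |\hat{A}_i|$:
\[
E[d(k)] \;=\; q^m \left[\left(\frac{1-q+q^2}{q^2}\right)^k - 1\right] \;=\; p^M \left[\left(\frac{1-p^{M/m}+p^{2M/m}}{p^{2M/m}}\right)^k - 1\right].
\]

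The main obstacle is the reduction itself, rather than any calculation. One has to be careful that a single problem solver may be an expert in several ladders simultaneously and thus serve as a lone collaborator bundling coverage of multiple missing ladders at once; this must be shown to align with the corresponding behavior in the Bernoulli model, where a single collaborator may likewise supply an arbitrary subset of meta-skills. Once this bundled covering behavior is verified to be transparently preserved, together with the observation that non-experts in $L_j$ can never appear in any minimal subcover, the isomorphism between $\Gamma(\Psi)$ for the ladder population and $\Gamma$ for the meta-skill Bernoulli population is complete and the theorem reduces to a direct substitution into the earlier formula.
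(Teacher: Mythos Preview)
Your approach is essentially the paper's own: collapse each ladder into a single meta-skill, argue that only experts can ever appear in a minimal subcover, and then invoke Theorem~\ref{thm:d(A) for independent skills} with the substitutions $M\mapsto m$ and $p\mapsto p^{M/m}$. The paper's proof is terser but structurally identical.

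One small slip to clean up: in your key reduction step you write that a collaborator possessing $a_t^j$ ``by the ladder property \ldots\ also possesses $a_{t+1}^j,\ldots,a_{M/m}^j$.'' The ladder property runs the other way---possessing $a_{i+1}^j$ implies possessing $a_i^j$, not the reverse. The argument still works, because if $i$ is not an expert in $L_j$ then the highest-indexed skill she lacks is necessarily the top rung $a_{M/m}^j$ itself (were she to hold it, downward closure would make her an expert); any collaborator supplying $a_{M/m}^j$ then holds the entire ladder by that same downward closure. Rephrasing the step in this direction makes the isomorphism with the Bernoulli model airtight.
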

\begin{proof}
The ladders are of equal length, so the length of a single ladder
is $\frac{M}{m}$, and the probability that an individual is an expert
in any one ladder is $p^{\frac{M}{m}}$. An individual receives a
link only if she is an expert in a field. Define a new set of skills
that correspond to the set of ladders: $\hat{S}=\left\{ L_{1}...L_{m}\right\} $.
The individual's new skill set is $\hat{A}_{i}$, where $L_{k}\in\hat{A}_{i}$
if she is an expert in ladder $L_{k}$. Each of these new skills has
a probability equal to the probability of being an expert in that
field, so define $\hat{p}=p^{\frac{M}{m}}$. The probability of being
an expert in a particular ladder is independent of the probability
of being an expert in any other ladder, so this problem reduces to
one with $m$ independent skills, with probability $p^{\frac{M}{m}}$.
The result then is a simple extension of Theorem \ref{thm:d(A) for independent skills}.
\end{proof}
We can now do a comparative static on the number of skill ladders,
to see how the amount of hierarchy in the skill pool affects the structure
of the collaboration network. Theorem \ref{thm:Number of Ladders}
indicates that in communities where skills are more hierarchical,
the degree distribution of the collaboration network is more skewed,
and a few experts become extremely high-degree hubs in the network.
Figure \ref{fig:Ladders Gini} shows how the gini coefficient depends
on the number of ladders for the case where $M=6$ (note that the
bottom curve $\left(m=6\right)$ is equivalent to the Bernoulli Skills
Model with $M=6$ in Figure \ref{fig:Gini for p and M}).

\begin{figure}[h]
\includegraphics[width=0.9\columnwidth]{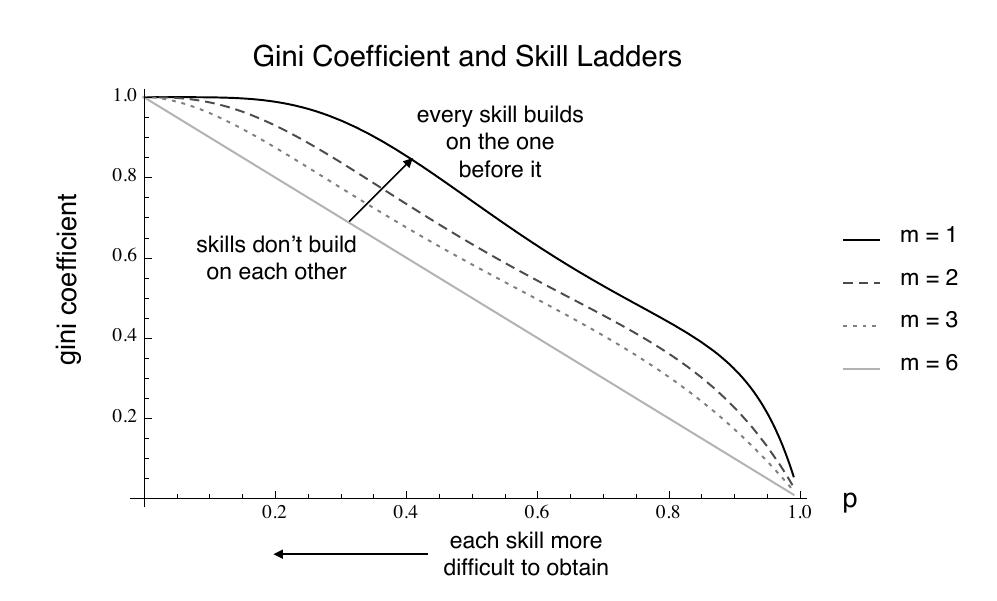}

\caption{\label{fig:Ladders Gini}The gini coefficient, $G\left(p,M,m\right)$,
for different values of $m$ and $p$ and $M=6$. }
\end{figure}

\begin{thm}
\label{thm:Number of Ladders}Suppose $S$ skills are arranged in
$m$ ladders of equal length, with constant conditional probability
$Prob\left(\mbox{have }a_{i}^{j}|\mbox{ have }a_{i-1}^{j}\right)=p$
and $Prob\left(\mbox{have }a_{1}^{j}\right)=p$ $\forall j=1...m$.
The gini coefficient of the resulting network is decreasing in the
number of ladders, $m$. That is, when there are fewer skill ladders,
the degree distribution becomes increasingly uneven.\end{thm}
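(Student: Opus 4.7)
My strategy is to reduce the theorem to a monotonicity question about the closed-form Gini expression of the Bernoulli Skills Model and then verify that monotonicity directly. First, I invoke Theorem~\ref{thm:degree with m ladders}: only experts in a ladder receive links, and expertise events across the $m$ ladders are independent Bernoulli trials with common probability $\hat p := p^{M/m}$. An individual's degree therefore depends only on the number $k$ of ladders she has mastered, and the distribution of $k$ is $\mathrm{Bin}(m,\hat p)$. Using $p^{M}=\hat p^{\,m}$, the degree formula in Theorem~\ref{thm:degree with m ladders} rewrites as $E[d(k)] = \hat p^{\,m}\bigl[\bigl((1-\hat p+\hat p^{\,2})/\hat p^{\,2}\bigr)^{k}-1\bigr]$, which is \emph{exactly} the Bernoulli degree formula with $(p,M)$ replaced by $(\hat p,m)$. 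Since the Gini coefficient depends only on the shape of the degree distribution, this gives
\[
G_{\mathrm{Ladder}}(p,M,m) \;=\; G\bigl(\hat p,\,m\bigr), \qquad \hat p = p^{M/m},
\]
where $G$ is the Gini expression from Section~\ref{sec:Bernoulli Skills Model}.

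With this identification, the claim becomes: $G\bigl(p^{M/m},m\bigr)$ is decreasing in $m$ for fixed $p,M$. This is \emph{not} an immediate corollary of Theorem~\ref{thm:Comparative Static M p}, because raising $m$ simultaneously raises $\hat p$ toward $1$ and raises the nominal skill count $m$, and by that theorem these two movements push $G$ in opposite directions. To resolve the tension I would re-parameterize along the level curve $\hat p^{\,m} = p^{M} =: c$ (a constant), so that every occurrence of $\hat p^{\,m}$ in
\[
G(\hat p,m) \;=\; 1 \,-\, \frac{(1-\hat p^{\,m})(1-\hat p)}{\hat p^{\,m}\Bigl[(1-\hat p+\hat p^{\,2})\Bigl(\bigl(\tfrac{1-\hat p+\hat p^{\,2}}{\hat p^{\,2}}\bigr)^{m}-1\Bigr)-m(1-\hat p)\Bigr]}
\]
collapses to the constant $c$, leaving a one-parameter family in which $\hat p$ and $m$ are linked by $\hat p = c^{1/m}$.

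The main obstacle is this final monotonicity check. Treating $m$ as continuous and applying the chain rule gives $\tfrac{dG}{dm} = G_{\hat p}\cdot\tfrac{d\hat p}{dm} + G_{m}$, with $\tfrac{d\hat p}{dm} = -\tfrac{\hat p\,\ln c}{m^{2}}>0$ an explicit positive quantity; the task is to show that the $G_{\hat p}$ contribution dominates so that the sum is negative. I would carry this out by clearing denominators and rewriting the sign of $\tfrac{dG}{dm}$ as the sign of a polynomial-plus-exponential expression in $\hat p$ (with $c$ held fixed), factoring out the common $(1-\hat p)$ and comparing leading-order coefficients to obtain the required inequality over the relevant range $\hat p\in(c,1)$. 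Intuitively, splitting each long ladder into shorter ladders transfers expertise to a larger share of the population and induces a mean-preserving contraction of the degree distribution, which is exactly what the Gini coefficient registers. Monotonicity in continuous $m$ then implies the discrete statement, and the curves in Figure~\ref{fig:Ladders Gini} provide a direct visual confirmation of the sign computed algebraically.
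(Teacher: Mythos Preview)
Your reduction is identical to the paper's: both identify the $m$-ladder model with a Bernoulli Skills Model on $m$ ``skills'' with success probability $\hat p = p^{M/m}$, so that the ladder Gini equals $G(\hat p,\, m)$.

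Where you diverge is in the final monotonicity step. The paper's proof is a single sentence: it records the ladder Gini as $G\!\left(p,\tfrac{M}{m}\right)$ and then invokes part~(1) of Theorem~\ref{thm:Comparative Static M p} (monotonicity in the second argument) to conclude that this decreases as $m$ grows. Your identification $G(\hat p,\, m)$ is the correct one---the expression $G\!\left(p,\tfrac{M}{m}\right)$ in the paper appears to be a slip, since the equivalent Bernoulli problem has $m$ skills (not $M/m$) and probability $p^{M/m}$ (not $p$). Taken at face value, the paper's formula makes the argument go through in one line, but for the wrong object.

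You are therefore right to flag that the result is \emph{not} an immediate corollary of Theorem~\ref{thm:Comparative Static M p}: along the constraint $\hat p^{\,m}=p^{M}$, increasing $m$ raises both arguments of $G$, and parts~(1) and~(2) of that theorem pull in opposite directions. Your chain-rule plan along this level curve is the honest way to close the gap, although in the proposal it remains a sketch; the sign computation you describe still has to be carried out, and note that part~(2) of Theorem~\ref{thm:Comparative Static M p} is asserted only as $p\to 0$, so you will likely need to control the $\hat p$-derivative of $G$ directly rather than simply quoting that result. In short: your route is longer than the paper's, but it is the one that actually confronts the competing effects; the paper's one-line shortcut trades on a mis-stated formula.
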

\begin{proof}
A model with $m$ skill ladders is equivalent to a Bernoulli Skills
model with $m$ skills and $p=p^{M/m}$. The gini coefficient is thus
$G\left(p,\frac{M}{m}\right)$. Since this function is increasing
in the second argument (see Theorem \ref{thm:Comparative Static M p}),
it is decreasing in $m$.
\end{proof}

\section{Discussion: Labor Market Implications}

The more general treatment of individual heterogeneity that I present
in this paper has some interesting implications for labor markets
knowledge-based fields, where workers are richly heterogeneous and
collaborate to solve problems.\footnote{If we were to recast the relationship between collaborators as a relationship
between employees and firms, then an extension of this model would
also provide insights into the affect of heterogeneity on labor in
a non-collaborative context.} In particular, given that there is a relationship between the number
of collaborators an individual has and her output,\footnote{In the Bernoulli Skills model, where each individual has a complement,
the relationship is one-to-one. } the model presented in this paper can provide insight into individual
heterogeneity and output. The results of Section \ref{sec: Value of Model with Multiple Skills}
strongly indicate that the decisions we make when modeling heterogeneity
matter. A full exploration of the effects of complex heterogeneity
on labor markets is outside the scope of this work, but I will provide
a brief discussion here. 

Empirically, output in creative and knowledge-based fields is highly
concentrated among a small number of people.\footnote{\citealp{Newman2001}, \citealp{Moody2004}, \citealp{Acedo2006b},
and \citealp{Goyal2006} show that a small number of academics produce
the majority of papers written, \citealp{Rosen1981} observes a similar
pattern in in music, film, and textbook writing, and \citealp{Uzzi2005}
notes a similarly skewed distribution of output among directors, producers,
and other creative artists on Broadway.} This long-tailed distribution of productive output has implications
for the distribution of wages and welfare, and thus there has been
considerable interest in understanding why such a concentration in
labor demand occurs. \citealp{Rosen1981}'s model of production can
induce a long-tailed distribution when there is a high premium on
quality, and production technology decouples effort from output quantity
(eg: in creative industries, where a single performance or album can
be enjoyed by many consumers). However, such technologies are not
relevant in knowledge-based industries, where effort is not decoupled
from output volume. The long-tailed distribution of collaborative
interactions in this paper suggests an alternative explanation for
the observed variation in output. Moreover, the prediction that output
will become more skewed as problems become more difficult suggests
an empirical test in the context of aggregate labor demand. 

The non-linear relationship between skill sets and output also has
implications for the explanation of variation in labor outcomes when
individual heterogeneity is imperfectly observed. If we assume that
skills make a linear contribution to output, then we would expect
a linear decline in the amount of variation explained. However, if
the relationship between individual skills and output is non-linear,
as is suggested by the model in this paper, then the amount of variation
in output explained by observables will drop nonlinearly as well.
This suggests that decreases in the amount of variation in labor outcomes
explained by observed agent heterogeneity could be the result of a
shift from manufacturing--where skill could reasonably be considered
to be a one-dimensional variable--to knowledge-based industries, where
individual heterogeneity is more complex. A more thorough exploration
of the implications of skill-based labor heterogeneity on labor markets
would make an interesting extension.

\section{Conclusion}

In this paper, I have presented a model of collaboration network formation
in which individuals have heterogeneous skill sets and collaborate
to solve problems that none of them could solve as individuals. The
result is a collaboration network, the structure of which depends
on the distribution of skills in the underlying community. This framework
is extremely flexible and has the potential to provide insights into
a wide variety of questions that could not previously be answered.
Here, I have used an independent distribution of skills to look at
how the difficulty of problem-solving tasks might affect the structure
of the collaboration network. But the same general framework can be
used to explore a variety of questions about the collaborative process.
For example, one could use correlations between related skills to
explore how skill specialization affects the structure of a collaborative
community. A more elaborate set of correlations between skills might
provide insight into why some skills are under-provided. 

One of the strengths of the framework introduced here is that it lends
itself to extension and modification. One could use changes to the
payoff structure to examine how community structure depends on team
dynamics. For example, how does the structure of the community change
if teams benefit from skill overlap as well as skill complementarity?
And how does that change the role of individuals with bridging skills?
Time is another interesting dimension for future work. A dynamic model
would allow for a more complete model of search. Individuals might
find new collaborators by searching the networks of their current
collaborators. Such a model might also incorporate a geographic component,
making it more difficult for geographically dispersed individuals
to collaborate. In a longer-run model there is room to explore the
role of learning. This model assumes a fixed skill population and
endowments. However, individuals clearly develop new skills over time.
A model of the skill acquisition decision would pave the way for a
more dynamic model, in which the network and population both evolve
over time. In that case, there may be characteristics of the distribution
of problems that influence the long-run distribution of skills and
network structure. In the even longer term, it would be interesting
to see how shocks to the distribution of problems (eg: a dramatic
shift in the outlook of a field) would affect the skill population
and collaboration network. In particular, what factors affect the
robustness of a problem-solving population to changes in the distribution
of problems faced? Finally, the results of this paper are evidence
that more complex models of individual heterogeneity are both tractable
and crucial to understanding the structure and function of collaborative
communities.

\section*{Appendices: The Formation of Collaboration Networks among Individuals
with Heterogeneous Skills}

\subsection*{\label{sec:Appendix: Proof of Stabilty/Efficiency}Appendix A: Pairwise
Stability and Efficiency}

Briefly, a network is pairwise stable if no individual would prefer
to terminate an existing link, and if no pair of individuals would
prefer to add a link. Although this definition is usually used in
undirected networks, it works equally well in the current context.
Formally, a directed collaboration network, $g$, is \emph{pairwise
stable} if 
\begin{enumerate}
\item for all $ij\in g$, $u_{i}\left(g\right)\ge u_{i}\left(g-ij\right)$
and $u_{j}\left(g\right)\ge u_{j}\left(g-ij\right)$
\item for all $ij\notin g$, if $u_{j}\left(g+ij\right)>u_{j}\left(g\right)$
then $u_{i}\left(g+ij\right)<u_{i}\left(g\right)$
\end{enumerate}
Together, these two conditions ensure that links are mutual. That
is, if a network is pairwise stable, then both ends agree to maintain
the link. 
\begin{thm*}
Any complementarity network, $g\in\Gamma\left(\Psi\right)$, is pairwise
stable. In other words, $\forall ij\in g$ $u_{i}\left(g\right)\ge u_{i}\left(g-ij\right)$
and $u_{j}\left(g\right)\ge u_{j}\left(g-ij\right)$ and for all $ij\notin g$,
if $u_{j}\left(g+ij\right)>u_{j}\left(g\right)$ then $u_{i}\left(g+ij\right)<u_{i}\left(g\right)$.\end{thm*}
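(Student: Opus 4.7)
My plan is to verify the two pairwise-stability conditions separately, and then obtain strong efficiency as an easy corollary of the observation that in any complementarity network every problem is solved. Throughout, I interpret $g - ij$ as the network obtained by removing $j$ from $C_i$ (leaving all other collaboration sets unchanged), and $g + ij$ as the network obtained by adjoining $j$ to $C_i$. The key fact I will exploit repeatedly is that each $C_i$ is, by definition of $\Gamma(\Psi)$, a \emph{minimal} subcover of $\omega \setminus A_i$.

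For condition (1), fix any existing link $ij \in g$, so $j \in C_i$. By minimality of $C_i$, the set $C_i \setminus \{j\}$ fails to cover $\omega \setminus A_i$, and hence in $g - ij$ individual $i$ cannot solve her own problem; her own-problem payoff drops from $1/|C_i|$ to $0$, while her contribution to other problems is unaffected. Thus $u_i(g) - u_i(g - ij) = 1/|C_i| > 0$. For the other endpoint, in $g$ individual $j$ collects $1/|C_i|$ from helping $i$, and loses this exact amount in $g - ij$ with nothing else changing, so $u_j(g) - u_j(g - ij) = 1/|C_i| > 0$. Both inequalities are strict, which is stronger than required.

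For condition (2), fix any potential link $ij \notin g$. Adding $ij$ replaces $C_i$ with $C_i \cup \{j\}$, whose cardinality is $|C_i| + 1$. Individual $i$'s own-problem payoff strictly decreases from $1/|C_i|$ to $1/(|C_i|+1)$, and her payoffs from others' problems are unchanged, so $u_i(g+ij) < u_i(g)$. This alone is sufficient to rule out the link under the pairwise-stability criterion, regardless of whether $u_j(g + ij) > u_j(g)$ (indeed $j$ does weakly gain $1/(|C_i|+1)$, which only reinforces the conclusion). Combined with (1), this yields pairwise stability.

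For strong efficiency, note that the sum of utilities across the population telescopes into a sum of per-problem contributions: each solved problem contributes exactly $1$ to the total utility (split evenly among the collaborators), while an unsolved problem contributes $0$. Hence the aggregate welfare of any network is bounded above by $|I|$. In a complementarity network every individual selects some $C_i \in \mathbb{C}_i$, which by definition covers $\omega \setminus A_i$, so every problem is solved and the upper bound $|I|$ is attained. Therefore $g$ is strongly efficient. I expect the only delicate point to be pinning down the convention for $g \pm ij$ (i.e., what happens to the rest of $C_i$ under the deviation); once the natural interpretation above is adopted, the remaining arithmetic is routine.
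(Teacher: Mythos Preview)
Your proposal is correct and follows essentially the same argument as the paper: minimality of $C_i$ forces $i$ to lose her own-problem payoff upon deleting any existing link, $j$ loses her share of $i$'s payoff, and adding a superfluous collaborator strictly dilutes $i$'s share. You go slightly beyond the stated theorem by also sketching strong efficiency (which the paper proves as a separate result via the same ``every problem is solved, so total welfare equals $N$'' observation), but the pairwise-stability portion matches the paper's proof almost line for line.
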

\begin{proof}
First, consider whether any individual wishes to unilaterally remove
a link, $ij\in g$. Removing this incoming link costs individual $j$
her share of the payoff from solving $i's$ problem $\left(\frac{1}{\left|C_{i}\right|+1}\ge0\right)$,
and thus she will never choose to do so. Individual $i$ chose a minimal
set of collaborators that allowed her to solve her problem, so removing
this outgoing link means that she can no longer solve the problem.
This would cost her the payoff $\left(\frac{1}{\left|C_{i}\right|+1}\ge0\right)$
from solving the problem, and thus she will also never choose to do
so. Finally, note that no individual would ever want to add an outgoing
link to a cost-minimizing collaboration network, because having chosen
a minimal set of collaborators, any additional link would require
her to further split her prize. \end{proof}
\begin{thm*}
Any cost minimizing collaboration network, $g\in\Gamma\left(\Psi\right)$,
is strongly efficient. In other words, $\sum_{i}u_{i}\left(g\right)\ge\sum_{i}u_{i}\left(g'\right)\,\forall\,g'\in G$.\end{thm*}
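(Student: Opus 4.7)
The plan is to observe that strong efficiency reduces to maximizing the count of solved problems, and that complementary skills networks solve every problem that can possibly be solved. The first step is to unpack the total utility. From the payoff rule, whenever individual $i$'s problem is solved, each of the $|C_i|$ members of her collaboration receives $\frac{1}{|C_i|}$, so that the contributions of problem $i$ to the sum of utilities across the population total exactly $1$. Consequently
\[
\sum_{i} u_i(g) = \#\{i : \omega \subseteq \textstyle\bigcup_{j\in C_i} A_j\},
\]
i.e.\ the number of individuals whose problem gets solved under the collaboration profile inducing $g$.

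Next, I would establish an upper bound. Call individual $i$ \emph{solvable} if $\omega \subseteq \bigcup_{j \in I} A_j$, i.e.\ if the pool of skills available in the population covers her problem at all; otherwise no collaboration whatsoever can solve her problem. Writing $P$ for the number of solvable individuals, we immediately have $\sum_i u_i(g') \le P$ for every network $g' \in G$, since an unsolvable individual contributes $0$ regardless of the choices of others.

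Finally, I would show this bound is attained by any $g \in \Gamma(\Psi)$. By the definition of a complementary skills network, each individual chooses $C_i \in \mathbb{C}_i$, and $\mathbb{C}_i$ is nonempty precisely when $A_i^c = \omega \setminus A_i$ admits a subcover from the skills in $I$, i.e.\ precisely when $i$ is solvable. Thus in $g$ every solvable individual solves her problem, giving $\sum_i u_i(g) = P$, which matches the upper bound. Combining these steps, $\sum_i u_i(g) \ge \sum_i u_i(g')$ for all $g' \in G$, establishing strong efficiency.

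The main obstacle is conceptual rather than technical: one must notice that the equal-splitting rule makes the total surplus from each solved problem exactly $1$, so that maximizing $\sum_i u_i$ reduces to maximizing the count of solved problems. Once that observation is in hand, the rest is essentially bookkeeping, with only the mild edge case of unsolvable individuals to handle (handled by restricting to the set of solvable individuals, whose problems are automatically solved in any complementary skills network).
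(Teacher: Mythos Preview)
Your proof is correct and follows essentially the same approach as the paper: both observe that the equal-splitting rule makes total utility equal to the number of solved problems, and that a complementary skills network solves every problem that can be solved. Your version is more detailed---in particular, you explicitly handle the edge case of unsolvable individuals by bounding total surplus by $P$ rather than $N$---but the underlying argument is identical.
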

\begin{proof}
Because all value is generated from solving problems, the maximum
possible value in the network is $N$. Since solving problems is incentive
compatible and there is no loss, the problem solvers always extract
the maximum value from the network.
\end{proof}

\subsection*{\label{sec:Appedix: Proof of SuperModularity}Appendix B: General
Proof of Theorem 2}

An individual with the set $A\cup B$ will be able to help anyone
needing any subset of those skills. Let $\delta\left(C\right)$ be
the demand for a particular set of skills, $C$. In the general case,
$ $$\delta\left(C\right)=\Psi\left(S\backslash C\right)+\sum_{D\,:\,\Psi\left(C\cup D\right)=0}\Psi\left(S\backslash\left(C\cup D\right)\right)$.
The fraction who can supply the set $C$ is $\sigma\left(C\right)=\sum_{D\subseteq S\backslash C}\Psi\left(C\cup D\right)$.
Note that $\delta\left(C\right)$ and $\sigma\left(C\right)$ depend
only on the particulars of the problem $\left(S\right)$, the distribution
of skills $\left(\Psi\right)$, and the subset of skills $\left(C\right)$.
Thus, any individual with the skill set $A\cup B$ has expected degree
\[
E\left[d\left(A\cup B\right)\right]=\sum_{C\subseteq A\cup B}\frac{\delta\left(C\right)}{\sigma\left(C\right)}
\]

We can divide the problems that an individual with $A\cup B$ can
solve into three groups:
\begin{enumerate}
\item Requires only skills from set $A$: $C\subseteq A$
\item Requires only skills from set $B$, including at least one found only
in\\
 $B$ : $\left\{ C\,|\,C\subseteq B\,and\,\exists\,b\in C\,st\,b\in B\backslash A\right\} $
\item Requires at least one skill from each set that can only be found in
that set:\\
 $\left\{ C\,|\,C\subseteq A\cup B,\,where\,\exists\,a,b\in C\,s.t.\,a\in A\backslash B\,and\,b\in B\backslash A\right\} $
\end{enumerate}
Using this partition, we can write

\begin{eqnarray*}
E\left[d\left(A\cup B\right)\right] & = & \sum_{C\subseteq A}\frac{\delta\left(C\right)}{\sigma\left(C\right)}+\sum_{C\subseteq B\,and\,C\cap B\ne\emptyset}\frac{\delta\left(C\right)}{\sigma\left(C\right)}+\sum_{C\subseteq A\cup B\,and\,C\cap A,\,C\cap B\ne\emptyset}\frac{\delta\left(C\right)}{\sigma\left(C\right)}\\
 & = & E\left[d\left(A\right)\right]+\sum_{C\subseteq B\,and\,C\cap B\ne\emptyset}\frac{\delta\left(C\right)}{\sigma\left(C\right)}+\phi
\end{eqnarray*}
which implies that 
\begin{eqnarray*}
E\left[d\left(A\cup B\right)\right]+E\left[d\left(A\cap B\right)\right] & = & E\left[d\left(A\right)\right]+\sum_{C\subseteq B\,and\,C\cap B\ne\emptyset}\frac{\delta\left(C\right)}{\sigma\left(C\right)}+\phi+E\left[d\left(A\cap B)\right)\right]\\
 & = & E\left[d\left(A\right)\right]+\left(\sum_{C\subseteq B\,and\,C\cap B\ne\emptyset}\frac{\delta\left(C\right)}{\sigma\left(C\right)}+\sum_{C\subseteq A\cap B}\frac{\delta\left(C\right)}{\sigma\left(C\right)}\right)+\phi\\
 & = & E\left[d\left(A\right)\right]+E\left[d\left(B\right)\right]+\phi\\
 & \ge & E\left[d\left(A\right)\right]+E\left[d\left(B\right)\right]
\end{eqnarray*}

\subsection*{\label{sub:Gini Coefficient}Appendix C: Discussion of the Gini Coefficient
in the discrete case}

The Gini coefficient measures the area between the Lorenz curve of
a distribution (in this case, the distribution of expected degree),
and the line of equality. In the case of a discrete distribution with
values $y_{0}...y_{N}$ where $y_{i}<y_{i+1}$, the Lorenz curve is
a piecewise function connecting points $\left(F_{i},D_{i}\right)$
where $F_{i}=\sum_{k=0}^{i}\Delta\left(y_{k}\right)$ is the fraction
of individuals with strictly less than $y_{i}$ links, and $D_{i}=\frac{\sum_{k=0}^{i}\Delta\left(y_{k}\right)y_{k}}{\sum_{k=0}^{N}\Delta\left(y_{k}\right)y_{k}}$
is the fraction of the total number of links they hold. See Figure
\ref{fig:Lorenz Curve Example} for an example. The gini coefficient
for a discrete distribution is given by $G=1-\sum_{i=1}^{N}D_{i}\left(F_{i}-F_{i-1}\right)$.
Lower values of the gini coefficient indicate a more equal distribution
of links across individuals in the population, and higher values indicate
a more skewed distribution of links. The coefficient is which is $0$
when the distribution is perfectly equal (ie: the bottom $x\%$ of
the population holds exactly $x\%$ of the links) and $1$ when all
of the links are held by a single individual.
\begin{figure}[h]
\includegraphics[scale=0.5]{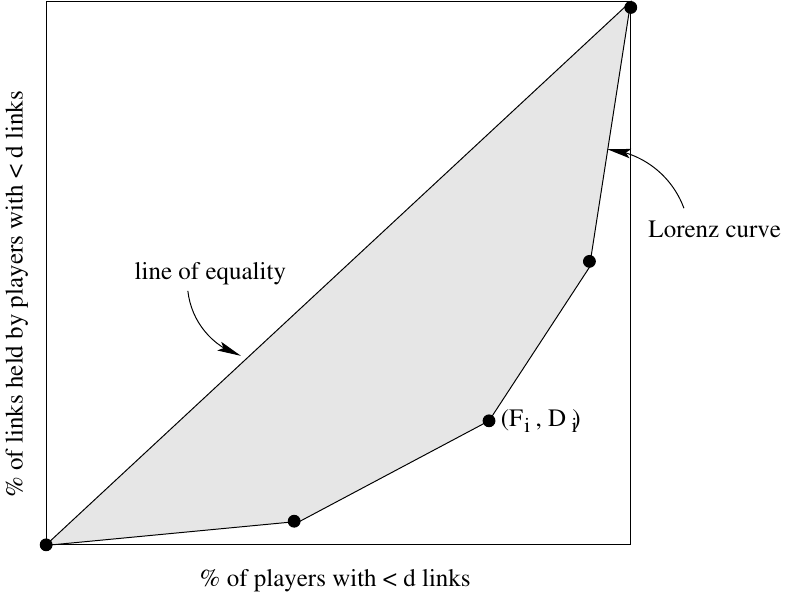}

\caption{\label{fig:Lorenz Curve Example}An example of the Gini coefficient
for a discrete distribution, $\Delta\left(y\right)$. In this case,
the random variable $y$ takes on one of five values, $y_{0}...y_{4}$.
The Gini coefficient is the area of the shaded region between the
line of equality and the Lorenz curve.}
\end{figure}

\bibliographystyle{nws}
\bibliography{SCN_paper}

\end{document}